\theoremstyle{plain}
\newtheorem{theorem}{Theorem}[section]
\newtheorem{proposition}[theorem]{Proposition}
\theoremstyle{definition}
\newtheorem{definition}[theorem]{Definition}
\theoremstyle{remark}
\newtheorem{remark}[theorem]{Remark}
\icmltitlerunning{Multivariate Conformal Selection}
\begin{document}

\twocolumn[
\icmltitle{Multivariate Conformal Selection}




\icmlsetsymbol{equal}{*}

\begin{icmlauthorlist}
\icmlauthor{Tian Bai}{equal,math}
\icmlauthor{Yue Zhao}{ymath}
\icmlauthor{Xiang Yu}{merck}
\icmlauthor{Archer Y. Yang}{equal,math,mila}
\end{icmlauthorlist}

\icmlaffiliation{math}{Department of Mathematics and Statistics, McGill University, Montreal, Canada}
\icmlaffiliation{mila}{Mila - Quebec AI Institute, Montreal, Quebec, Canada}
\icmlaffiliation{ymath}{Department of Mathematics, University of York, York, UK}
\icmlaffiliation{merck}{MRL, Merck \& Co., Inc., Rahway, NJ, USA}

\icmlcorrespondingauthor{Archer Y. Yang}{archer.yang@mcgill.ca}

\icmlkeywords{Uncertainty Quantification, Machine Learning, Model-free Selective Inference, Conformal Inference, Multivariate Responses}

\vskip 0.3in
]



\printAffiliationsAndNotice{\icmlEqualContribution} 

\begin{abstract}
Selecting high-quality candidates from large datasets is critical in applications such as drug discovery, precision medicine, and alignment of large language models (LLMs). While Conformal Selection (CS) provides rigorous uncertainty quantification, it is limited to univariate responses and scalar criteria. To address this issue, we propose Multivariate Conformal Selection (mCS), a generalization of CS designed for multivariate response settings. Our method introduces regional monotonicity and employs multivariate nonconformity scores to construct conformal $p$-values, enabling finite-sample False Discovery Rate (FDR) control. We present two variants: \texttt{mCS-dist}, using distance-based scores, and \texttt{mCS-learn}, which learns optimal scores via differentiable optimization. Experiments on simulated and real-world datasets demonstrate that mCS significantly improves selection power while maintaining FDR control, establishing it as a robust framework for multivariate selection tasks.
\end{abstract}

\section{Introduction}

Selecting a subset of promising candidates from a large pool is crucial across various scientific and real-world applications. In drug discovery, researchers search vast chemical spaces to identify compounds with strong effects, such as high binding affinity to a specific target \citep{szymanski2011adaptation, scannell2022predictive, sheridan2015ecounterscreening, zhang2025artificial}. Similarly, precision medicine aims to identify positive individual treatment effects~\cite{lei2021conformal}, and post-hoc certification of large language model (LLM) outputs seeks to retain only trustworthy generations that meet user-defined criteria \citep{gui2024conformal}.
 
In these settings, true test responses (e.g., binding affinity or alignment score) are often unavailable, requiring selection to rely on machine learning model predictions. Since selected targets drive downstream decisions, quantifying selection uncertainty is essential for maintaining efficiency. Conformal selection~\citep{jin2023selection} provides a model-agnostic framework for selection with uncertainty quantification by extending conformal prediction~\citep{Vovk2005} to multiple hypothesis testing, using conformal $p$-values~\citep{Bates2023} and multiple testing corrections. It has shown promise in real-world drug discovery and LLM alignment~\citep{gui2024conformal}. 

However, existing conformal selection methods are limited to univariate responses and selection targets of the form $y > c$, where $c$ is a user-defined threshold. Many real-world applications require selection based on multiple interdependent criteria. For instance, LLM outputs must simultaneously satisfy alignment requirements such as fairness, safety, and correctness~\citep{bai2022training}, which are better represented as multivariate alignment scores. This highlights the need for a principled selection method with uncertainty quantification in multivariate settings.

In this paper, we extend the conformal selection framework to multivariate response settings. 
To ensure finite-sample \textit{false discovery rate} (FDR) control in our procedure, we generalize the concept of monotonicity~\citep{jin2023selection} to \textit{regional monotonicity} for the nonconformity function. We propose two types of nonconformity scores that satisfy this property: (i) distance-based nonconformity scores for regular-shaped and convex target regions, and (ii) a learning-based method for optimizing nonconformity scores. The latter approach leverages a loss function that penalizes either the smooth selection size or the conformal $p$-value to learn an optimal score. This method is particularly effective when the dimension of responses is large or when the target region is irregular or nonconvex. Through experiments on simulated and real-world datasets, both variants of mCS demonstrates enhanced selection power over baseline methods while ensuring finite-sample FDR control.

\section{Background and Related Work} \label{sec:background}

\paragraph{Problem Setup}
We let $\boldsymbol{x}\in \mathbb{R}^{p}$ represent the $p$-dimensional features, and let $\boldsymbol{y}\in \mathbb{R}^{d}$ denote the $d$-dimensional multivariate response variables. 
We consider a training dataset $\mathcal{D}_{\textnormal{train}} = \{\boldsymbol{x}_i, \boldsymbol{y}_i\}_{i=1}^n$ and a test dataset $\mathcal{D}_{\textnormal{test}} = \{\boldsymbol{x}_{n+j}\}_{j=1}^m$, where the  corresponding test responses $\{\boldsymbol{y}_{n+j}\}_{j=1}^m$ are unobserved. 
We further assume that the combined set of training and test samples $\{\boldsymbol{x}_i, \boldsymbol{y}_{i}\}_{i=1}^{n+m}$ are drawn i.i.d.\footnote{Later, we will relax the i.i.d. assumption to exchangeability conditions.}\,from an unknown, arbitrary distribution  $\mathcal{D}_{X\times Y}$.

We formulate the selection problem as follows:
Given a predefined $d$-dimensional closed region $R \subseteq \mathbb{R}^{d}$, our goal is to identify a subset of indices $\mathcal{S} \subseteq \{1, \dots, m\}$ from $\mathcal{D}_{\textnormal{test}}$ such that as many test observations $j \in \mathcal{S}$ satisfy $\boldsymbol{y}_{n+j} \in R$ as possible, while controlling the FDR \citep{bh1995} below a user-specified level $q$. 
The FDR is defined as the expected proportion of false discoveries ($j \in \mathcal{S}$ but $\boldsymbol{y}_{n+j} \notin R)$ among all selected observations
\begin{align} \label{eq:fdr}
\text{FDR} = \mathbb{E}\Bigg[\frac{|\mathcal{S} \cap \mathcal{H}_0|}{|\mathcal{S}|} \Bigg]
\end{align}
where $\mathcal{H}_0 = \{j: \boldsymbol{y}_{n+j} \notin R\}$, with the convention that $0/0=0$ in the fraction above. This criterion measures the overall Type-I error rate of the selection procedure.

The overall Type-II error of selection can be quantified by the \emph{power}, defined as the expected proportion of desirable observations $(\boldsymbol{y}_{n+j} \in R)$ that are correctly selected,
\begin{align} \label{eq:power}
\text{Power} = \mathbb{E}\Bigg[\frac{|\mathcal{S} \cap \mathcal{H}_1|}{|\mathcal{H}_1|} \Bigg]
\end{align}
where $\mathcal{H}_1 = \{j: \boldsymbol{y}_{n+j} \in R\}$.
The Type-II error of selection is therefore $(1 - \text{Power})$. An ideal selection procedure should aim to maximize the power while keeping the FDR below the specified nominal level.

\paragraph{Conformal Prediction}


Conformal prediction (CP) \cite{Vovk2005} is a popular framework for uncertainty quantification that constructs prediction intervals on a per-sample basis.
Assuming exchangeable calibration and test data, CP provides prediction sets $\widehat{C}_{1-\alpha}(\boldsymbol{x})$ with finite-sample coverage guarantees for $\alpha\in (0,1)$:
$$
\mathbb{P}(y \in \widehat{C}_{1-\alpha}(\boldsymbol{x})) \geq 1 - \alpha.
$$
Although CP was originally designed for univariate responses, numerous studies have proposed multivariate generalizations \citep{kuleshov2018conformal, bates2021distribution, messoudi2022ellipsoidal, johnstone2021conformal, feldman2023calibrated, park2024semiparametric, klein2025multivariate}.
However, these multivariate CP methods are not directly applicable to our selection problem. The primary objective of CP -- constructing confidence sets for predictions -- does not naturally align with selection tasks.
Specifically, the potentially complex shapes of the multivariate CP sets $\widehat{C}_{1-\alpha}$ may be incompatible with the pre-defined target region $R$.

Even in the simpler cases, such as when the response is binary, using CP for selection introduces a multiplicity issue \cite{jin2023selection}. 
In this context, CP only controls the per-comparison error rate (PCER), which differs from the FDR. PCER also measures the Type-I error, and is defined as~\eqref{eq:fdr} with the denominator replaced by $m$, the size of the test dataset.
By definition, the PCER is always smaller than the FDR, making it a less stringent error control criterion. As a result, procedures controlling PCER may fail to meet the stricter requirements of FDR control.

\paragraph{Conformal Selection}
Conformal selection (CS) \cite{jin2023selection} is a model-agnostic selection framework that guarantees  finite-sample FDR control. However, CS only considers the univariate response case ($d=1$) and assumes that the selection region takes the form $[c, +\infty)$, where $c$ is a predefined threshold.  In this setting, CS formulates one hypothesis test per candidate:
\begin{align*}
    H_{0j}: y_{n+j} \leq c \quad \text{vs.} \quad
    H_{1j}: y_{n+j} > c .
\end{align*}
Rejecting the null hypothesis $H_{0j}$ indicates that the $j$-th test sample is selected, as its response is deemed to exceed the threshold $c$.

CS uses nonconformity scores to guide its selection process. A nonconformity measure quantifies how atypical (or nonconforming) an observation is, based on the relationship between inputs and responses. 
For calibration samples, the nonconformity scores are
$V_i = V(\boldsymbol{x}_i, y_i)$, and for test samples, they are $\widehat{V}_{n+j} = V(\boldsymbol{x}_{n+j}, c)$, where $c$ replaces the unobserved $y_{n+j}$. These scores are then used to compute conformal $p$-values through a rank-based comparison of $\widehat{V}_{n+j}$ against the calibration scores $V_1, \dots, V_n$. A lower rank of $\widehat{V}_{n+j}$ relative to the calibration scores provides stronger evidence for rejecting $H_{0j}$.

To determine the final selected subset $\mathcal{S}$, CS applies the Benjamini-Hochberg (BH) procedure~\citep{bh1995, bh1997}, a widely used method for controlling FDR in multiple testing setting, to the set of conformal $p$-values. The use of the BH procedure ensures that the overall Type-I error rate is kept below the specified level.

\section{Multivariate Conformal Selection}

In this section, we introduce the key concepts, procedures, and theoretical foundations of multivariate Conformal Selection (mCS).

First, for each $d$-dimensional multivariate response $\boldsymbol{y}_{n+j}$, mCS performs the following hypothesis test:
\begin{align*}
    H_{0j}: \boldsymbol{y}_{n+j} \in R^c \quad \text{vs.} \quad H_{1j}: \boldsymbol{y}_{n+j} \in R,
\end{align*}
where $R$ represents an arbitrary pre-defined closed target region in $\mathbb{R}^{d}$.  Multivariate responses $\boldsymbol{y}_{n+j}$ can represent either regression or classification outcomes. In this paper, we focus on the more challenging regression setting. For a discussion of the classification case, please see Appendix~\ref{sec:cls_discussion}. The mCS process consists of three main steps:
\begin{enumerate}[itemsep=0.3pt, topsep=1pt]
   \item \textbf{Training:} Construct a multivariate predictive model $\hat{\mu}$ for $\boldsymbol{y}$. This model can be obtained using any suitable machine learning algorithm.
   \item \textbf{Calibration:} Build a regionally monotone multivariate nonconformity function based on $\hat\mu$, and evaluate this function on the calibration dataset and test dataset. Subsequently, we compute the conformal $p$-values for each test sample.
   \item \textbf{Thresholding:} Apply the Benjamini-Hochberg (BH) procedure as in the original CS procedure to the set of conformal $p$-values, yielding the final selection set $\mathcal{S}$.
\end{enumerate}

Both the training step and the calibration step rely on the labeled dataset $\mathcal{D}_{\textnormal{train}}$.
In the case where a model $\hat\mu$ is already available, mCS can be directly applied using all labeled data for calibration. Otherwise, the training data is divided into two subsets: one for model training (the proper training dataset) and the other for calibration. For simplicity, in the following discussion, we assume that $\hat\mu$ is already available and all training data $\{\boldsymbol{x}_i, \boldsymbol{y}_i\}_{i=1}^n$ are used for calibration so that $\mathcal{D}_{\textnormal{cal}}= \mathcal{D}_{\textnormal{train}}$.

The conformal $p$-values~\citep{Bates2023} are used to perform the hypothesis tests. If the true responses $\{\boldsymbol{y}_{n+j}\}_{j=1}^m$ were observed, the \textit{oracle} conformal $p$-value would be defined as
\begin{align}
\label{eq:oracle_p}
    p_j^* = \frac{\sum_{i=1}^n\mathbbm{1}\{V_i < {V}_{n+j} \} + U_j (1 + \sum_{i=1}^n \mathbbm{1}\{V_i = {V}_{n+j}\}) }{n+1}
\end{align}
where $V_i = V(\boldsymbol{x}_{i}, \boldsymbol{y}_{i})$ for $i=1,\dots,n+m$ and $V$ is a multivariate nonconformity function based on $\hat{\mu}$, and $U_j \sim \text{Unif}(0,1)$ is an independent random variable for the tie-breaking of the nonconformity scores. 
We defer the specific choice of $V$ to later sections. 

The evaluation of the oracle $p$-value $p_j^*$ is infeasible, because that in the above definition~\eqref{eq:oracle_p}, the computation of $V_{n+j} = V(\boldsymbol{x}_{n+j}, \boldsymbol{y}_{n+j})$ requires knowledge of the unobserved response $\boldsymbol{y}_{n+j}$. 
To address this issue, we replace $V_{n+j}$ with $\widehat{V}_{n+j} = V(\boldsymbol{x}_{n+j}, \boldsymbol{r}_{n+j})$, where $\boldsymbol{r}_{n+j}$ is an arbitrarily chosen point in the target region $R$, yielding the (practical) conformal $p$-values:
\begin{align}
\label{eq:conf_p}
    p_j = \frac{\sum_{i=1}^n\mathbbm{1}\{V_i < \widehat{V}_{n+j} \} + U_j  (1 + \sum_{i=1}^n \mathbbm{1}\{V_i = \widehat{V}_{n+j}\}) }{n+1} .
\end{align}
Standard results from conformal inference ensure that the oracle conformal $p$-values $p_j^*$ follow the $\text{Unif}(0,1)$ distribution, in particular implying that they are conservative in the sense that $p_j^*$ as a random variable has a super-uniform distribution on $[0,1]$, satisfying $\mathbb{P}(p_j^* \leq \alpha) \leq \alpha$ ~\citep{Bates2023}.  
To guarantee that the practical conformal $p$-values $p_j$ also maintain this conservativeness, 
the nonconformity function $V$ must satisfy a property that we introduce as \textit{regional monotonicity}. With this property the $p_j$'s in the resulting mCS procedure will ensure the control of the FDR.
We formally define the regional monotonicity as follows:
\begin{definition}[Regional Monotonicity]
\label{def:monotone}
    A nonconformity score $V: \mathcal{X} \times \mathcal{Y} \rightarrow \mathbb{R}$ satisfy the regional monotone property if $V(\boldsymbol{x}, \boldsymbol{y}') \leq V(\boldsymbol{x}, \boldsymbol{y})$ for any $\boldsymbol{x} \in \mathcal{X}$, $\boldsymbol{y}' \in R^c$ and $\boldsymbol{y} \in R$.
\end{definition}

Definition~\ref{def:monotone} leads to the conservativeness of $p_j$. The following proposition formalizes this result, with proof available in Appendix~\ref{proof:conservative}.

\begin{proposition} \label{prop:conservative}
    Given that the calibration data $\{(\boldsymbol{x}_i, \boldsymbol{y}_i)\}_{i=1}^n$ together with the $j$-th data test data point $(\boldsymbol{x}_{n+j}, \boldsymbol{y}_{n+j})$ are exchangeable for $j \in \{1, \dots, m\}$,  regionally monotone nonconformity scores $V$  ensures that the conformal $p$-value $p_j$ defined in \eqref{eq:conf_p} is conservative in the following sense, 
    \begin{align}
\label{eq:conservative}
        \mathbb{P}(p_j \leq \alpha \textnormal{ and } j\in \mathcal{H}_0) \leq \alpha, \quad \textnormal{for all } \alpha \in (0,1).
    \end{align}
\end{proposition}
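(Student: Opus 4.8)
The plan is to reduce the statement about the practical $p$-value $p_j$ to the known conservativeness of the oracle $p$-value $p_j^*$, by exploiting regional monotonicity to establish a pointwise comparison $p_j \le p_j^*$ on the event $\{j \in \mathcal H_0\}$. First I would recall the standard conformal fact: since the calibration points $(\boldsymbol x_i,\boldsymbol y_i)_{i=1}^n$ together with $(\boldsymbol x_{n+j},\boldsymbol y_{n+j})$ are exchangeable, the ranks of $V_{n+j}$ among $V_1,\dots,V_n,V_{n+j}$ are (with the uniform tie-breaking $U_j$) uniformly distributed, so $p_j^*\sim\mathrm{Unif}(0,1)$, and in particular $\mathbb P(p_j^*\le\alpha)\le\alpha$ for every $\alpha\in(0,1)$. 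I would cite \citet{Bates2023} for this, exactly as the excerpt does in the paragraph preceding the proposition.

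Next I would do the comparison step. Fix a realization and condition on $j\in\mathcal H_0$, i.e. $\boldsymbol y_{n+j}\in R^c$. Recall $\widehat V_{n+j}=V(\boldsymbol x_{n+j},\boldsymbol r_{n+j})$ with $\boldsymbol r_{n+j}\in R$, and $V_{n+j}=V(\boldsymbol x_{n+j},\boldsymbol y_{n+j})$ with $\boldsymbol y_{n+j}\in R^c$. Regional monotonicity (Definition~\ref{def:monotone}), applied with $\boldsymbol x=\boldsymbol x_{n+j}$, $\boldsymbol y'=\boldsymbol y_{n+j}\in R^c$, and $\boldsymbol y=\boldsymbol r_{n+j}\in R$, gives $V_{n+j}=V(\boldsymbol x_{n+j},\boldsymbol y_{n+j})\le V(\boldsymbol x_{n+j},\boldsymbol r_{n+j})=\widehat V_{n+j}$. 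Now I would argue that $p_j$ is a coordinatewise nondecreasing function of the "test score" it uses: the numerator of \eqref{eq:conf_p} is
\[
\sum_{i=1}^n\mathbbm 1\{V_i < t\} + U_j\Bigl(1+\sum_{i=1}^n\mathbbm 1\{V_i = t\}\Bigr),
\]
and I would check this is nondecreasing in $t$ for fixed $U_j\in[0,1]$ and fixed $V_1,\dots,V_n$ — the only subtlety is at the jump points $t=V_i$, where as $t$ crosses $V_i$ the term $U_j\,\mathbbm 1\{V_i=t\}$ drops by $U_j$ but $\mathbbm 1\{V_i<t\}$ rises by $1\ge U_j$, so there is no decrease. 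Hence $t\mapsto(\text{that expression})$ is nondecreasing, and since $V_{n+j}\le\widehat V_{n+j}$ we get $p_j^*\le p_j$ on $\{j\in\mathcal H_0\}$... wait — I need the inequality in the useful direction, so let me re-read: we want $\mathbb P(p_j\le\alpha\text{ and }j\in\mathcal H_0)\le\alpha$, which follows if $p_j\ge p_j^*$ on $\{j\in\mathcal H_0\}$, because then $\{p_j\le\alpha\}\cap\{j\in\mathcal H_0\}\subseteq\{p_j^*\le\alpha\}$ and monotonicity of probability plus $\mathbb P(p_j^*\le\alpha)\le\alpha$ finishes it. So I would conclude: on $\{j\in\mathcal H_0\}$ we have $V_{n+j}\le\widehat V_{n+j}$, hence by the monotonicity of the numerator $p_j^*\le p_j$, hence $\mathbb P(p_j\le\alpha,\,j\in\mathcal H_0)\le\mathbb P(p_j^*\le\alpha)\le\alpha$.

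I would organize the writeup as: (i) state the numerator-monotonicity lemma and verify it (one line on the jump-point case); (ii) invoke regional monotonicity to get $V_{n+j}\le\widehat V_{n+j}$ on $\mathcal H_0$ and deduce $p_j^*\le p_j$ there; (iii) invoke $p_j^*\sim\mathrm{Unif}(0,1)$ from exchangeability and conclude by set inclusion. The main obstacle — really the only non-bookkeeping point — is step (i), being careful that the tie-breaking term does not spoil monotonicity in $t$; the argument above (the drop of $U_j\le 1$ is dominated by the simultaneous unit jump of the strict-inequality count) handles it cleanly, and one should also note the edge case where several $V_i$ coincide at the crossing value, which the same inequality covers since each coincidence contributes $+1$ to the strict count against $-U_j$ to the tie term. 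Everything else is exchangeability boilerplate that may be cited rather than reproven.
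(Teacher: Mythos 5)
Your proposal is correct and follows essentially the same route as the paper's proof: use exchangeability to get super-uniformity of the oracle $p$-value $p_j^*$, then use regional monotonicity on the event $\{j\in\mathcal{H}_0\}$ (with $\boldsymbol{y}_{n+j}\in R^c$, $\boldsymbol{r}_{n+j}\in R$) to get $V_{n+j}\le \widehat{V}_{n+j}$ and hence $p_j^*\le p_j$, concluding by set inclusion. Your only addition is an explicit verification that the $p$-value expression is nondecreasing in the test score despite the tie-breaking term, a step the paper passes over with ``by definition''; your check of the jump points is correct.
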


\begin{remark}[Clarification on conservativeness]
    The conservativeness described in~\eqref{eq:conservative} differs from the conventional notion of statistical conservativeness, which is not conditional on the event $j \in \mathcal{H}_0$. Due to the inherent randomness in $\boldsymbol{y}_{n+j}$ within our hypothesis tests, an unknown dependency exists between the $p$-values and the event ${ j \in \mathcal{H}_0 }$. Consequently, the standard form of conservativeness does not hold in this context.
\end{remark}

\begin{remark}[Univariate monotonicity as a special case]
    Regional monotonicity generalizes the univariate monotonicity concept introduced in CS \cite{jin2023selection}, which was originally defined for nonconformity scores increasing in their second argument. The original definition is restricted to the univariate case, where the target region $R$ is specified as $(c, +\infty)$. For a univariate nonconformity score $V: \mathcal{X} \times \mathcal{Y} \to \mathbb{R}$, monotonicity implies regional monotonicity. Specifically, for any $y \in R^c = (-\infty, c]$ and $y' \in R$, it holds that $V(\boldsymbol{x}, y) \leq V(\boldsymbol{x}, y')$ for all $\boldsymbol{x} \in \mathcal{X}$.
    Moreover, even in the univariate case, the original definition of monotonicity across the entire domain $\mathcal{X}$ is unnecessary and monotonicity across the regions $R^c$ and $R$ suffices.
\end{remark}

Once a regionally monotone multivariate nonconformity score is defined, conformal $p$-values can be computed using~\eqref{eq:conf_p}. Leveraging these conformal $p$-values, mCS again applies the Benjamini-Hochberg procedure~\citep{bh1995} to construct a selection set $\mathcal{S}$. The complete approach is outlined in Algorithm~\ref{algo:mcs}.

        




\begin{algorithm}[H]
\caption{mCS: Multivariate Conformal Selection}
\label{algo:mcs}
\begin{algorithmic}[1]
    \REQUIRE{Calibration data $\mathcal{D}_{\textnormal{cal}}=\{(\boldsymbol{x}_i, \boldsymbol{y}_i)\}_{i=1}^n$, test data $\mathcal{D}_{\textnormal{test}}=\{\boldsymbol{x}_{n+j}\}_{j=1}^m$, target target region $R$, FDR nominal level $q \in (0,1)$, regionally monotone nonconformity score $V: \mathcal{X} \times \mathcal{Y} \rightarrow \mathbb{R}$.}
    \ENSURE{Selection set $\mathcal{S}$.}
    \STATE Compute $V_i = V(\boldsymbol{x}_i,\boldsymbol{y}_i)$ for $i = 1, \dots, n$, and $\widehat{V}_{n+j} = V(\boldsymbol{x}_{n+j}, \boldsymbol{r}_{n+j})$ for $j = 1, \dots, m$ with $\boldsymbol{r}_{n+j} \in R$.
    \STATE Construct conformal $p$-values $\{p_i\}_{i=1}^m$ as in~\eqref{eq:conf_p}.
    \STATE (BH procedure) Compute the BH selection threshold $k^* = \max\{ k \in \mathbb{Z}_{\geq 0}: \sum_{j=1}^m \mathbbm{1}\{p_j \leq qk/m\} \geq k \}$, and return the selection set as $\mathcal{S} = \{j: p_j \leq qk^*/m\}$.
\end{algorithmic}
\end{algorithm}

The following theorem shows that Algorithm~\ref{algo:mcs} controls FDR in finite sample. The proof can be found in Appendix~\ref{proof:main}.

\begin{theorem}
\label{thm:main}
    Suppose $V$ is a regionally monotone nonconformity score, and for any $j \in \{1, \dots, m\}$, the random variables $V_1, \dots, V_n, V_{n+j}$ are exchangeable conditioned on $\{\widehat{V}_{n+\ell}\}_{\ell \neq j}$. Then, for any $q \in (0,1)$, the output $\mathcal{S}$ of mCS satisfies FDR $\leq q$.
\end{theorem}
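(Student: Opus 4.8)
The plan is to mirror the original Conformal Selection FDR argument of \citet{jin2023selection}, reducing the FDR control for the practical $p$-values $p_j$ to that for the oracle $p$-values $p_j^*$ via the regional monotonicity of $V$. First I would recall that for $j \in \mathcal{H}_1$ (i.e.\ $\boldsymbol{y}_{n+j} \in R$) the practical threshold point $\boldsymbol{r}_{n+j}$ and the true response both lie in $R$, so nothing needs to be said there; for $j \in \mathcal{H}_0$ (i.e.\ $\boldsymbol{y}_{n+j} \in R^c$) regional monotonicity gives $\widehat{V}_{n+j} = V(\boldsymbol{x}_{n+j}, \boldsymbol{r}_{n+j}) \geq V(\boldsymbol{x}_{n+j}, \boldsymbol{y}_{n+j}) = V_{n+j}$, whence $p_j \geq p_j^*$ pointwise on the event $\{j \in \mathcal{H}_0\}$. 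Thus on $\mathcal{H}_0$ the practical $p$-values dominate the oracle ones, so any rejection event driven by $p_j$ being small is contained in the corresponding event for $p_j^*$.

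Next I would set up the standard BH-with-random-$p$-values decomposition of the FDR. Writing $\mathcal{S}$ for the output set and $k^*$ for the data-dependent BH threshold, the key identity is
\begin{align*}
\mathrm{FDR} = \mathbb{E}\left[\frac{|\mathcal{S} \cap \mathcal{H}_0|}{|\mathcal{S}| \vee 1}\right] = \sum_{j \in \mathcal{H}_0} \mathbb{E}\left[\frac{\mathbbm{1}\{p_j \leq q k^*/m\}}{k^*}\right].
\end{align*}
The usual trick is to condition on a ``leave-one-out'' version of the threshold: define $k^*_{-j}$ (or the appropriate quantity $R_j$) computed from the $p$-values with the $j$-th replaced so that, on the event $\{p_j \leq q k^*/m\}$, one has $k^* = R_j$, and $R_j$ is measurable with respect to a $\sigma$-field under which $p_j$ (restricted to $j \in \mathcal{H}_0$) is still conservative. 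Then each summand is bounded by $\mathbb{E}[\mathbbm{1}\{p_j \leq q R_j/m\}/R_j]$, and conditioning on $R_j$ and $\{j \in \mathcal{H}_0\}$ together with Proposition~\ref{prop:conservative} (applied in its conditional form, using the exchangeability of $V_1,\dots,V_n,V_{n+j}$ given $\{\widehat V_{n+\ell}\}_{\ell\neq j}$) yields $\mathbb{P}(p_j \leq q R_j/m \text{ and } j \in \mathcal{H}_0 \mid R_j) \leq q R_j/m$. Summing over the at most $m$ indices in $\mathcal{H}_0$ gives $\mathrm{FDR} \leq \sum_{j\in\mathcal{H}_0} q/m \leq q$.

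The main obstacle is the measurability bookkeeping in the leave-one-out step: one must verify that the BH threshold $k^*$ coincides with a pruned threshold $R_j$ on the relevant event, and — crucially — that conditioning on $R_j$ does not destroy the conditional super-uniformity of $p_j$ on $\{j \in \mathcal{H}_0\}$. This requires that $R_j$ be a function only of $\{p_\ell\}_{\ell \neq j}$ and of the multiset $\{V_1,\dots,V_n,V_{n+j}\}$ (which is invariant under the relevant exchangeability), so that the conditional argument of Proposition~\ref{prop:conservative} still applies; the extra wrinkle relative to standard BH is that here conservativeness is only the \emph{joint} statement $\mathbb{P}(p_j \le \alpha,\ j\in\mathcal H_0)\le\alpha$ rather than an unconditional super-uniformity, so the decomposition must keep the indicator $\mathbbm{1}\{j\in\mathcal H_0\}$ attached throughout. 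I would also note that the randomization variables $U_j$ are independent across $j$ and independent of everything else, which is what lets the conditional application of Proposition~\ref{prop:conservative} go through cleanly; everything else (monotonicity of $k \mapsto \mathbbm{1}\{\cdot\}$ type manipulations, the $0/0 = 0$ convention) is routine and can be cited from \citet{jin2023selection}.
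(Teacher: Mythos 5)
Your proposal follows essentially the same route as the paper's proof (itself an adaptation of Theorem~6 of \citet{jin2023selection}): use regional monotonicity to dominate $p_j$ by the oracle $p_j^*$ on nulls, replace the BH threshold by a leave-one-out quantity that agrees with it on the rejection event and is measurable with respect to the multiset $[V_1,\dots,V_n,V_{n+j}]$ together with $\{\widehat V_{n+\ell}\}_{\ell\neq j}$, then invoke conditional super-uniformity of $p_j^*$ under the assumed exchangeability and sum over $j$. The only difference is that you defer, with a citation, the one nontrivial piece the paper actually writes out — the case analysis showing that swapping in the modified $p$-values $p_\ell^{(j)}$ (which compare against $V_{n+j}$) and $p_j^*$ leaves the BH rejection set unchanged on $\{j\in\mathcal S,\ \boldsymbol{y}_{n+j}\in R^c\}$ — so your plan is correct but not self-contained on that step.
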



\section{Choices of Nonconformity Score} \label{sec:choice_of_score}

While the previous sections demonstrate that FDR-controlled selection can be performed using any regionally monotone nonconformity score, the selection power heavily depends on the quality of the chosen score. 
Although related studies have explored this issue in the context of CP~\citep{romano2019conformalized, kivaranovic2020adaptive, sesia2020comparison},  there has been limited focus on the choice of scores specific for CS.

In this section, we introduce two types of nonconformity score that satisfies the regional monotonicity. As a result of Theorem~\ref{thm:main}, applying Algorithm \ref{algo:mcs} with the purposed scores would guarantee FDR control.

\subsection{\texttt{mCS-dist}: Distance-based Scores} \label{sec:dist}

For multivariate selection, we propose distance-based nonconformity scores of the following form:
\begin{align} \label{eq:dist_score}
    V(\boldsymbol{x}, \boldsymbol{y}) = D_1(\boldsymbol{y}, R^c) - D_2(\hat\mu(\boldsymbol{x}), R^c)
\end{align}
where $\hat\mu$ is a trained predictive model of $\boldsymbol{y}$, and $D_1$ and $D_2$ are distance functions. 
Two examples of distance-based nonconformity score are as follows:
\begin{align}
    1. (\text{regular}) \quad D_1(\boldsymbol{z}, R^c) &= D_2(\boldsymbol{z}, R^c) = \inf_{\boldsymbol{s} \in R^c} \|\boldsymbol{z} - \boldsymbol{s}\|_p. \label{eq:dist_score_1}\\
    2. (\text{clipped}) \quad D_1(\boldsymbol{z}, R^c) &= M \cdot \mathbbm{1}\{\boldsymbol{z} \notin R^c \cup \partial R\}, \nonumber \\
    D_2(\boldsymbol{z}, R^c) &= \inf_{\boldsymbol{s} \in R^c} \|\boldsymbol{z} - \boldsymbol{s}\|_p , \label{eq:dist_score_2}
\end{align}
where $M$ is a large constant that serves as a relaxation of infinity. We discuss the role of $M$ in the following paragraphs.
These scores generalize the \textit{signed error} score and the \textit{clipped} score~\citep{jin2023selection}, respectively.

This formulation enables the decomposition of $V(\boldsymbol{x}, \boldsymbol{y})$ into two terms.
The first term $D_1$ inherently ensures the regional monotonicity of the score $V$ (Definition~\ref{def:monotone}), as $D_1$ is a distance function satisfying $0 = D_1(\boldsymbol{y}', R^c) \leq D_1(\boldsymbol{y}, R^c)$ for any $\boldsymbol{y}' \in R^c$ and $\boldsymbol{y} \in R$. 
Meanwhile, the second term $D_2$ measures the distance between the predicted responses of the points and $R^c$; this term is designed to increase as $\hat\mu(\boldsymbol{x})$ moves away from $R^c$, ensuring that the test points with the predicted responses having large distance from $R^c$ will yield smaller test scores $\widehat{V}_{n+j}$. According to~\eqref{eq:conf_p}, these points will have smaller conformal $p$-values and are more likely to be rejected (selected) by the BH procedure. Note that when the predictive model $\hat\mu$ outputs an estimated conditional distribution $\widehat{P}(\boldsymbol{y}| \boldsymbol{x})$ -- as in classification, conditional density estimation, or Bayesian models -- the second term $D_2$ can be replaced by the predicted probability of being in the target region: $\boldsymbol{y} \in R$, i.e. $\int \mathbbm{1}\{
\boldsymbol{y}\in R\} \mathrm{d}\widehat{P}(\boldsymbol{y} | \boldsymbol{x})$. This serves the same purpose: points with high predicted probability of satisfying the selection criterion will receive lower scores and are more likely to be selected.

We note that computing $\widehat{V}_{n+j}=V(\boldsymbol{x}_{n+j}, \boldsymbol{r}_{n+j})$ in \eqref{eq:conf_p} requires choosing a point $\boldsymbol{r}_{n+j} \in R$ to ensure FDR control.
Although any point in  $R$ works, selecting $\boldsymbol{r}_{n+j}$ on the boundary  $\partial R$ is optimal for maximizing selection power for both regular and clipped scores. In this case, for any fixed $\boldsymbol{x}_{n+j}$, the test score $\widehat{V}_{n+j}$ achieves its minimum, ensuring it to be smaller than a larger proportion of  calibration scores ${V}_{i}=V(\boldsymbol{x}_{i}, \boldsymbol{y}_{i})$. For example, with the clipped score, if $\boldsymbol{r}_{n+j}\in \partial R$, then the first term $D_1$ becomes 0 in $\widehat{V}_{n+j}$, while for any calibration samples  with $\boldsymbol{y}_i \in R$, we have $D_1=M$ 
(except when $\boldsymbol{y}_i$ lies exactly on $\partial R$, which occurs with zero probability.) This yields smaller $p$-values $p_j$ and enables more samples to be selected.
We require $M$ to be large for the same reason.

The following asymptotic analysis provides a theoretical basis for preferring the second, clipped score~\eqref{eq:dist_score_2} over the first score~\eqref{eq:dist_score_1} when choosing a distance-based nonconformity score. 
This result extends the original CS framework~\citep{jin2023selection} to the multivariate setting, as formalized in the following theorem:

\begin{theorem}
\label{thm:heuristic}
    Let $V$ be any fixed regionally monotone nonconformity score, and suppose $\{(\boldsymbol{x}_i, \boldsymbol{y}_i)\}_{i=1}^{n+m}$ are exchangeable from distribution  $\mathcal{D}_{X\times Y}$. Let $(\boldsymbol{x}, \boldsymbol{y})$ denote a random pair also drawn from $\mathcal{D}_{X\times Y}$, and define $F(v, u) = \mathbb{P}(V(\boldsymbol{x}, \boldsymbol{y}) < v) + u \cdot \mathbb{P}(V(\boldsymbol{x}, \boldsymbol{y}) = v)$ for any $v \in \mathbb{R}$ and $u \in [0,1]$. 
    Assuming the choice of $\boldsymbol{r}_{n+j} \equiv \boldsymbol{r} \in R$ is fixed,
    define 
    \begin{align}
    \label{eq:t_star}
        t^* = \sup \Big\{t \in [0,1]: \frac{t}{\mathbb{P}(F(V(\boldsymbol{x}, \boldsymbol{r}), U) \leq t)} \leq q\Big\} .
    \end{align}
    Suppose that, for any sufficiently small $\epsilon > 0$, there exists some $t \in (t^* - \epsilon, t^*)$ such that $\frac{t}{\mathbb{P}(F(V(\boldsymbol{x}, \boldsymbol{r}), U) \leq t)} \leq q$. Then the output $\mathcal{S}$ of Algorithm~\ref{algo:mcs} from input $\{(\boldsymbol{x}_i, \boldsymbol{y}_i)\}_{i=1}^{n} \cup \{\boldsymbol{x}_{n+j}\}_{j=1}^m$ satisfies
    \begin{align}
        &\lim_{n, m\rightarrow \infty} \textnormal{FDR} = \frac{\mathbb{P}(F(V(\boldsymbol{x}, \boldsymbol{r}), U) \leq t^*, \boldsymbol{y} \in R^c)}{\mathbb{P}(F(V(\boldsymbol{x}, \boldsymbol{r}), U) \leq t^*)}, \quad \textnormal{and } \nonumber \\
        &\lim_{n, m\rightarrow \infty} \textnormal{Power} = \frac{\mathbb{P}(F(V(\boldsymbol{x}, \boldsymbol{r}), U) \leq t^*, \boldsymbol{y} \in R)}{\mathbb{P}(\boldsymbol{y} \in R)}. \label{eq:asymp_power}
    \end{align}
\end{theorem}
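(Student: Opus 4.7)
The plan is to replace the empirical conformal $p$-values $p_j$ by their population analogues $\widetilde{p}_j := F(V(\boldsymbol{x}_{n+j}, \boldsymbol{r}), U_j)$, to characterize the BH threshold in terms of the population CDF $G(t) := \mathbb{P}(F(V(\boldsymbol{x}, \boldsymbol{r}), U) \leq t)$, and then to express the asymptotic FDR and Power as ratios of probabilities obtained through a law-of-large-numbers argument.

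First, I would establish the approximation $p_j \approx \widetilde{p}_j$. Writing
\[\widehat{F}_n(v, u) := \frac{1}{n+1}\Bigl[\sum_{i=1}^n \mathbbm{1}\{V_i < v\} + u\bigl(1 + \sum_{i=1}^n \mathbbm{1}\{V_i = v\}\bigr)\Bigr],\]
we have $p_j = \widehat{F}_n(\widehat{V}_{n+j}, U_j)$ and $\widetilde{p}_j = F(\widehat{V}_{n+j}, U_j)$. Since the $V_i$'s are exchangeable draws from the distribution of $V(\boldsymbol{x}, \boldsymbol{y})$, a Glivenko--Cantelli argument gives $\sup_{v \in \mathbb{R},\, u \in [0,1]} |\widehat{F}_n(v, u) - F(v, u)| \to 0$ almost surely, so $\max_j |p_j - \widetilde{p}_j| \to 0$ almost surely as $n \to \infty$. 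The $\widetilde{p}_j$'s are i.i.d.\,by exchangeability of the test inputs, with common CDF $G$.

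Next, I would prove that the BH threshold $t_m := qk^*/m$ converges to $t^*$. The BH rule picks $t_m$ as the largest element of the grid $\{q/m, 2q/m, \dots, q\}$ for which $m^{-1}\sum_j \mathbbm{1}\{p_j \leq t_m\} \geq t_m/q$. Using the uniform approximation from the previous step together with Glivenko--Cantelli for the $\widetilde{p}_j$, this empirical proportion converges uniformly in $t$ to $G(t)$. For the lower bound $t_m \geq t^* - \epsilon$ eventually, I invoke the hypothesis that there exist $t \in (t^* - \epsilon, t^*)$ with $t/G(t) \leq q$: for any such $t$, the nearest grid point eventually satisfies the empirical BH constraint. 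For the matching upper bound, any $t > t^*$ violates $t/G(t) \leq q$ strictly, so it also violates the empirical analogue for large $m$. Combined, this yields $t_m \to t^*$ almost surely. This step is the main technical obstacle: the BH cutoff is a maximum over a discrete grid, while $t^*$ is defined by a supremum that need not be attained, and the extra hypothesis is tailored precisely to exclude the pathological case where $G$ jumps across the line $t/q$ at $t^*$ with an isolated gap just below.

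Finally, I would take limits of
\[\text{FDR}_{n,m} = \mathbb{E}\Biggl[\frac{m^{-1}\sum_{j} \mathbbm{1}\{p_j \leq t_m,\, \boldsymbol{y}_{n+j}\in R^c\}}{m^{-1}\sum_j \mathbbm{1}\{p_j \leq t_m\}}\Biggr]\]
(with $0/0 = 0$) and the analogous ratio for Power. Since $t_m \to t^*$ almost surely and the joint empirical distribution of $(\widetilde{p}_j, \mathbbm{1}\{\boldsymbol{y}_{n+j}\in R\})$ converges uniformly to its population law, the numerators and denominators converge almost surely to $\mathbb{P}(F(V(\boldsymbol{x},\boldsymbol{r}), U) \leq t^*, \boldsymbol{y} \in R^c)$, $\mathbb{P}(F(V(\boldsymbol{x},\boldsymbol{r}), U) \leq t^*)$, $\mathbb{P}(F(V(\boldsymbol{x},\boldsymbol{r}), U) \leq t^*, \boldsymbol{y} \in R)$, and $\mathbb{P}(\boldsymbol{y} \in R)$, respectively. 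Because the random ratios inside the expectation are uniformly bounded by $1$, bounded convergence then delivers the asserted limits for FDR and Power.
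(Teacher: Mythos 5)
Your proposal is correct and follows essentially the same route as the paper: the paper omits a formal proof (deferring to Proposition~7 of \citet{jin2023selection}), and its sketch in Appendix~\ref{sec:adv_clip} uses exactly your ingredients — Glivenko--Cantelli approximation of the conformal $p$-values by $F(V(\boldsymbol{x},\boldsymbol{r}),U)$, the Storey-type characterization of the BH threshold converging to $t^*$ (with the extra hypothesis ruling out a gap just below $t^*$), and a law-of-large-numbers plus bounded-convergence argument for the FDR and power ratios. Your write-up is, if anything, slightly more explicit about the two-sided control of the BH threshold than the paper's own sketch.
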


We omit the proof of Theorem~\ref{thm:heuristic}, as it closely mirrors Proposition~7 in \citet{jin2023selection}. An intuitive explanation can be found in Appendix~\ref{sec:adv_clip}. Leveraging a characterization of the BH procedure~\citep{storey2004strong}, the theorem establishes that the asymptotics of FDR and power can be precisely achieved by replacing each term in \eqref{eq:fdr} and \eqref{eq:power} with its population counterpart. Notably, in this context, $t^*$ represents the population version of the BH rejection threshold for the $p$-values.

Theorem~\ref{thm:heuristic} indicates that the second, clipped score \eqref{eq:dist_score_2} is preferable to the first score \eqref{eq:dist_score_1} for achieving higher power. 
According to \eqref{eq:asymp_power}, since the value $V(\boldsymbol{x}, \boldsymbol{r})=-\inf_{\boldsymbol{s} \in R^c} \|\boldsymbol{r} - \boldsymbol{s}\|_p$ is identical for both scores assuming $\boldsymbol{r} \in \partial R$, it suffices to compare their asymptotic BH thresholds $t^*$. The score with the larger $t^*$ achieves higher asymptotic power and is therefore  more effective.
In the definition of $t^*$ in \eqref{eq:t_star}, the fraction can be rewritten as
\begin{align*}
    G_V(t) \equiv \frac{t}{\mathbb{P}(F(V(\boldsymbol{x}, \boldsymbol{r}), U) \leq t)} 
    &= \frac{\mathbb{P}(F(V(\boldsymbol{x}, \boldsymbol{y}), U) \leq t)}{\mathbb{P}(F(V(\boldsymbol{x}, \boldsymbol{r}), U) \leq t)} \\
    &= \frac{\mathbb{P}(V(\boldsymbol{x}, \boldsymbol{y}) \leq s)}{\mathbb{P}(V(\boldsymbol{x}, \boldsymbol{r}) \leq s)}
\end{align*}
when $s$ is the inverse of $F(\cdot, U)$ at $t$, assuming it exists. The first equality follows from the fact that $F(V(\boldsymbol{x}, \boldsymbol{y}), U) \sim \text{Unif}(0, 1)$, while the second arises from the monotonicity of $F$ with respect to its first argument $v$.
To maximize $t^*$ in \eqref{eq:t_star} (equivalently, the inverse $s^*$), an effective score should yield a larger $V(\boldsymbol{x}, \boldsymbol{y})$ relative to $V(\boldsymbol{x}, \boldsymbol{r})$, thereby reducing $G_{V}(t)$ for a fixed $t$. This, in turn, results in a larger $t^*$ when computing $\sup\{t:G_{V}(t)\leq q\}$ in \eqref{eq:t_star}. 

This criterion is precisely satisfied by the clipped score. 
An alternative justification for favoring the clipped score, based on maximizing the realized FDR, is provided in Appendix~\ref{sec:adv_clip},  along with further discussions on Theorem~\ref{thm:heuristic}.
For an empirical comparison of the performance of the two scores \eqref{eq:dist_score_1} and \eqref{eq:dist_score_2}, refer to Appendix~\ref{extra_experiments:score_comp}.



\subsection{\texttt{mCS-learn}: Learning-based Nonconformity Scores}

The two distance-based nonconformity scores introduced in the previous section offer straightforward and practical solutions for many scenarios. However, their effectiveness, particularly in the design of the second distance term $D_2(\cdot)$, is limited in some cases. For example, our numerical simulations indicate that \texttt{mCS-dist} would only achieve suboptimal power when $R$ is a nonconvex set; see Appendix~\ref{nonconvex} for further details. Furthermore, when the target region $R$ is irregular, constructing a closed-form distance function can be challenging,  leading to higher computational costs and potential inaccuracies.

To address these challenges, we propose an alternative method \texttt{mCS-learn}, which leverages a loss function that penalizes either the smooth selection size or conformal $p$-value to learn an optimal nonconformity score within the following family:
\begin{align} \label{eq:score_class}
    V^\theta(\boldsymbol{x}, \boldsymbol{y}) = M \cdot\mathbbm{1}\{ \boldsymbol{y} \notin R^c \cup \partial R\} - f_\theta(\boldsymbol{x}, \boldsymbol{y}; R)
\end{align}
where $M$ is a large constant and $f_\theta: \mathcal{X} \times \mathcal{Y} \rightarrow \mathbb{R}$ is a flexible function  parametrized by $\theta$, that can be chosen from a specific machine learning model class, such as kernel machines, gradient boosting models or neural networks, etc. 
The first term, an indicator function identical to $D_1(\cdot)$ in~\eqref{eq:dist_score_2}, ensures regional monotonicity in Definition~\ref{def:monotone} and boosts selection power, as suggested in Section~\ref{sec:dist}.
The second term generalizes the distance term $D_2(\cdot)$ from \eqref{eq:dist_score}, offering a more expressive framework for constructing optimal nonconformity scores.

The following result demonstrates the expressiveness of the family~\eqref{eq:score_class} by showing that it can include the optimal nonconformity score for any  selection task. The proof is provided in Appendix~\ref{proof:expressive}.

\begin{proposition} \label{prop:expressive}
    Let $\{(\boldsymbol{x}_i, \boldsymbol{y}_i)\}_{i=1}^{n+m}$ be sampled \textit{i.i.d.} from a distribution, and assume a fixed choice of $\boldsymbol{r}_{n+j} \equiv \boldsymbol{r}$.
    Under Algorithm~\ref{algo:mcs}, for any nominal FDR level $q$ and target region $R$, there exists a function $f^*$ such that the score constructed using $f^*$ in~\eqref{eq:score_class} maximizes the number of selected samples (and thus the power) among all scores with FDR control.
\end{proposition}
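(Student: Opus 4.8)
The plan is to exhibit an explicit $f^*$ for which the induced score $V^{\theta^*}$ in \eqref{eq:score_class} reproduces, as closely as any regionally monotone score can, the selection set of an \emph{oracle} procedure that knows the joint distribution. The key observation is that by Theorem~\ref{thm:heuristic}, the asymptotic power of Algorithm~\ref{algo:mcs} with a fixed score $V$ and fixed $\boldsymbol{r}$ is an increasing function of the population BH threshold $t^*$, and $t^*$ in turn is governed by the ratio $G_V(t) = \mathbb{P}(V(\boldsymbol{x},\boldsymbol{y})\le s)/\mathbb{P}(V(\boldsymbol{x},\boldsymbol{r})\le s)$; making this ratio as small as possible for each $t$ makes $t^*$ (hence the power) as large as possible. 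So the proof reduces to finding $f^*$ that minimizes $G_{V^{\theta}}$ pointwise among all scores of the form \eqref{eq:score_class} that still control FDR.

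First I would fix $\boldsymbol{r}\in\partial R$, so that for calibration points with $\boldsymbol{y}_i\in R$ the indicator term contributes $M$ while the test pseudo-score $\widehat V_{n+j}=M\cdot\mathbbm{1}\{\boldsymbol{r}\notin R^c\cup\partial R\}-f^*(\boldsymbol{x}_{n+j},\boldsymbol{r};R)=-f^*(\boldsymbol{x}_{n+j},\boldsymbol{r};R)$; taking $M$ large decouples the comparison of $\widehat V_{n+j}$ against calibration scores into the $\boldsymbol{y}_i\in R^c$ block (where $V_i=-f^*(\boldsymbol{x}_i,\boldsymbol{y}_i;R)$) and the $\boldsymbol{y}_i\in R$ block (where $V_i\approx M-f^*(\boldsymbol{x}_i,\boldsymbol{y}_i;R)\to+\infty$, almost surely exceeding $\widehat V_{n+j}$). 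Next I would choose $f^*(\boldsymbol{x},\boldsymbol{y};R)$ to depend only on $\boldsymbol{x}$ through the conditional probability of success, e.g.\ $f^*(\boldsymbol{x},\boldsymbol{y};R)=\eta(\boldsymbol{x}):=\mathbb{P}(\boldsymbol{y}\in R\mid\boldsymbol{x})$ (or any strictly increasing transform of it), so the pseudo-score $\widehat V_{n+j}=-\eta(\boldsymbol{x}_{n+j})$ ranks test points exactly in order of their true success probability. I would then argue that among all regionally monotone scores, ranking test points by the true posterior probability of the event $\{\boldsymbol{y}\in R\}$ maximizes the number of true positives among the first $k$ selections for every $k$ — this is the standard Neyman–Pearson / ``rank by likelihood'' optimality statement — and since the BH cutoff $k^*$ depends only on the $p$-value order statistics, and the oracle FDR constraint \eqref{eq:fdr} pins down the same population threshold $t^*$ for any score sharing this ranking, the score built from $f^*$ attains the largest admissible $|\mathcal S|$ and hence the largest power. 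Finally I would note that $\eta(\cdot)\in[0,1]$ is a bounded measurable function of $\boldsymbol{x}$ alone, hence trivially realizable as $f_\theta$ for a sufficiently rich model class, completing the membership claim for the family \eqref{eq:score_class}.

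I would organize the write-up as: (i) reduce power-maximization to maximizing $\liminf |\mathcal S\cap\mathcal H_1|$ subject to $\mathrm{FDR}\le q$, invoking Theorem~\ref{thm:heuristic}; (ii) show that for any score $V$ the selection set is $\{j:\widehat V_{n+j}\le \tau\}$ for a data-dependent cutoff $\tau$, so the realized selection is entirely determined by the \emph{ordering} of the test pseudo-scores $\widehat V_{n+1},\dots,\widehat V_{n+m}$ together with the calibration empirical CDF; (iii) show that ordering test points by $-\eta(\boldsymbol{x}_{n+j})$ is the ordering that, at any fixed FDR budget, admits the largest prefix of true positives, using that $\mathbb{P}(\boldsymbol{y}\in R\mid\boldsymbol{x})$ is the optimal scoring rule for the $0$–$1$ event $\{\boldsymbol{y}\in R\}$ (a sorting/rearrangement argument: swapping an out-of-region point ahead of an in-region point with higher $\eta$ cannot increase the true-positive count at matched selection size); (iv) check that $V^{\theta^*}$ with $f^*=\eta$ is regionally monotone (immediate: the indicator term alone gives $V(\boldsymbol{x},\boldsymbol{y}')=-\eta(\boldsymbol{x})\le M-\eta(\boldsymbol{x})=V(\boldsymbol{x},\boldsymbol{y})$ for $\boldsymbol{y}'\in R^c$, $\boldsymbol{y}\in R$), so Theorem~\ref{thm:main} gives FDR control and all the asymptotics of Theorem~\ref{thm:heuristic} apply; (v) conclude $f^*$ lies in a standard ML model class.

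The main obstacle I anticipate is step (iii): making the ``rank by posterior probability is optimal'' claim fully rigorous in the selection/BH setting rather than in a classical fixed-rejection-region setting. Unlike ordinary Neyman–Pearson, here the cutoff $k^*$ is itself random and determined jointly by all $p$-values through the BH step, and the FDR is an expectation over the randomness of the unobserved $\boldsymbol{y}_{n+j}$'s, so a naive exchange argument must be made at the population level via Theorem~\ref{thm:heuristic} (comparing $t^*$'s and the numerators $\mathbb{P}(F(V(\boldsymbol{x},\boldsymbol{r}),U)\le t^*,\boldsymbol{y}\in R)$) rather than sample by sample. I would handle this by working entirely with the population quantities $G_V$ and $t^*$ from Theorem~\ref{thm:heuristic}, showing that $G_{V^{\theta^*}}(t)\le G_V(t)$ for all $t$ and all admissible $V$ — which follows because, writing $s$ for the relevant quantile, $\mathbb{P}(V(\boldsymbol{x},\boldsymbol{y})\le s\mid \boldsymbol{x})$ is minimized over thresholding rules precisely by thresholding $\eta(\boldsymbol{x})$ — and then reading off the power comparison from \eqref{eq:asymp_power}. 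A secondary technical point is the harmless handling of $M$: I would either send $M\to\infty$ after fixing everything else, or note that for $M$ exceeding the range of $f^*$ plus the diameter of the relevant score set the argument is exact; and I would flag the measure-zero boundary caveat ($\boldsymbol{y}_i\in\partial R$) exactly as the paper does elsewhere.
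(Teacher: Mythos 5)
There is a genuine gap. The proposition is a finite-sample existence statement about Algorithm~\ref{algo:mcs}: for the given data and any FDR-controlling score, some member of the family \eqref{eq:score_class} selects at least as many samples. The paper proves this with a short \emph{absorption} argument that never identifies the optimum: let $V^{\textnormal{opt}}$ be any optimal FDR-controlling score, set $W(\boldsymbol{x},\boldsymbol{y}) = M\cdot\mathbbm{1}\{\boldsymbol{y}\notin R^c\cup\partial R\} + V^{\textnormal{opt}}(\boldsymbol{x},\boldsymbol{y})$ (i.e.\ $f^* = -V^{\textnormal{opt}}$), and take $\boldsymbol{r}\in\partial R$. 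Then every test score is unchanged, $\widehat W_{n+j} = \widehat V^{\textnormal{opt}}_{n+j}$, while every calibration score can only increase, $W_i \ge V^{\textnormal{opt}}_i$; by \eqref{eq:conf_p} each $p_j$ weakly decreases, so the BH threshold $k^*$ and the selection set weakly grow. Hence the family contains a score at least as good as the optimum, pathwise and in finite samples. Your proposal never uses this idea; instead it tries to exhibit the optimum explicitly as $f^*=\eta(\boldsymbol{x})=\mathbb{P}(\boldsymbol{y}\in R\mid\boldsymbol{x})$ and to certify its optimality through the asymptotics of Theorem~\ref{thm:heuristic}.

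That route has two concrete problems. First, Theorem~\ref{thm:heuristic} only controls $\lim_{n,m\to\infty}$ FDR and power for a \emph{fixed} score, so even if completed it would yield an asymptotic optimality statement, not the claim that the score ``maximizes the number of selected samples'' under Algorithm~\ref{algo:mcs} among all FDR-controlling scores — the quantity the proposition (and the paper's proof) is about. Second, your step (iii) — that thresholding $\eta$ minimizes $G_V(t)$ uniformly over all admissible scores, i.e.\ the Neyman--Pearson-type claim that no regionally monotone score can do better than ranking by the posterior probability — is precisely the hard part, and your sketch asserts it rather than proves it: the quantile $s$ corresponding to a given $t$ differs from score to score, so the pointwise comparison $G_{V^{\theta^*}}(t)\le G_V(t)$ requires an argument at matched selection probability $\mathbb{P}(F(V(\boldsymbol{x},\boldsymbol{r}),U)\le t)$, and the comparison class is not restricted to scores of the form ``threshold a function of $\boldsymbol{x}$.'' In effect you have reduced an easy existence statement to a substantially harder oracle-optimality theorem and left that theorem unproven. (Your regional-monotonicity check for $f^*=\eta$ and the handling of $M$ are fine, but they do not repair the main gap.)
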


\begin{remark}[Subfamilies of the score class]
    A notable subfamily of \eqref{eq:score_class} is 
    \begin{align} \label{eq:family}
        M \cdot\mathbbm{1}\{ \boldsymbol{y} \notin R^c \cup \partial R\} - f_\theta(\hat\mu(\boldsymbol{x}); R),
    \end{align}
    where $f_{\theta}$ depends solely on the prediction $\hat{\mu}(\boldsymbol{x})$. This subfamily includes the clipped distance-based score from  Section \ref{sec:dist} as a special case. Here, regional monotonicity is guaranteed for any constant $M$, but the score can also be generalized further by allowing $f$ to depend on $\boldsymbol{x}$ as well:
    \begin{align*}
        M \cdot\mathbbm{1}\{ \boldsymbol{y} \notin R^c \cup \partial R\} - f_\theta(\boldsymbol{x}, \hat\mu(\boldsymbol{x}); R).
    \end{align*}
    In contrast, the broader family defined in \eqref{eq:score_class} offers greater flexibility by incorporating $\boldsymbol{y}$ in the second term. However, this added expressiveness  requires a sufficiently large $M$ to preserve regional monotonicity. Specifically, $M$ must satisfy $M > 2 |f_\theta(\boldsymbol{x}, \boldsymbol{y}; R)|$.
    In practice, $M$ is chosen to be sufficiently large to ensure that this inequality holds across the entire dataset.
\end{remark}

\begin{remark}[Incorporating pretrained models]
    
    This could be achieved in several ways, such as using the predictions of $\hat\mu$ as inputs to $f_\theta$, or train $f_\theta$ as a prediction on top of $\hat\mu$ when both models are implemented as neural networks. 
    While such practice does not increase the expressiveness of the score family, it often facilitates the training of $f_\theta$, as $\hat\mu(\boldsymbol{x})$ estimates $\boldsymbol{y}$ and is very informative for selection. Since $f_\theta$ can directly learn the data and the selection task, \texttt{mCS-learn} can still perform well when $\hat\mu$ is poorly fitted; see Appendix~\ref{other_mu}.
\end{remark}

To identify an optimal function within the family~\eqref{eq:score_class}, we introduce a differentiable loss function that mimics the inherently non-differentiable mCS procedure. The ``hard" sorting and ranking operations in the mCS workflow are replaced with their smooth, differentiable counterparts~\citep{blondel2020fast, cuturi2019differentiable}. 
We adopt the implementation introduced in~\citet{blondel2020fast}, with $\ell_2$ regularization and regularization strength set to 0.1.
The resulting loss function is then used for a chosen machine learning method to train $f_{\theta}$. Specifically, we partition the calibration data into three batches
$\mathcal{D}_{\textnormal{cal}} = \mathcal{D}_{f\textnormal{-train}} \cup \mathcal{D}_{f\textnormal{-val}} \cup \mathcal{D}'_{\textnormal{cal}}$, where $\mathcal{D}_{f\textnormal{-train}}$ and $\mathcal{D}_{f\textnormal{-val}}$ are used for training and validating $f_\theta$, respectively. Upon completion of training and validation, Algorithm~\ref{algo:mcs} can be applied with  $\mathcal{D}'_{\textnormal{cal}}$ as the calibration dataset for $V^{\theta}$ to generate the final selection set $\mathcal{S}$.

\paragraph{Training Step.}
The training loss function is defined based on the smoothed selection size. We denote the softened rank of an element $a \in A$ within the set $A$ by $\textnormal{soft-rank}(a; A)$. We randomly partition $\mathcal{D}_{f\textnormal{-train}}$ into two subsets $\mathcal{D}_{f\textnormal{-train1}}$ and $\mathcal{D}_{f\textnormal{-train2}}$, and we assume $\mathcal{D}_{f\textnormal{-train1}} = \{(\boldsymbol{x}_i, \boldsymbol{y}_i)\}_{i=1}^{n'}$ and $\mathcal{D}_{f\textnormal{-train2}} = \{(\boldsymbol{x}_{n'+j}, \boldsymbol{y}_{n'+j})\}_{j=1}^{m'}$ for notational simplicity. We define the smooth conformal $p$-value $\bar{p}_j^\theta$ for $j=1, \dots, m'$ as
\begin{align} \label{eq:soft_p}
    \bar{p}_j^\theta = \frac{\textnormal{soft-rank}\big(\widehat{V}_{n'+j}^\theta; \{V_i^\theta\}_{i =1}^{n'} \cup \{\widehat{V}_{n'+j}^\theta\}\big)}{n'+1}.
\end{align}
where $V_i^\theta := V^\theta(\boldsymbol{x}_{i}, \boldsymbol{y}_i)$ are computed on $\mathcal{D}_{f\textnormal{-train1}}$ and $\widehat{V}_{n'+j}^\theta := V^\theta(\boldsymbol{x}_{n'+j}, \boldsymbol{r}_{n'+j})$ are computed on $\mathcal{D}_{f\textnormal{-train2}}$, with $V^\theta$ in~\eqref{eq:score_class}. Here, $\mathcal{D}_{f\textnormal{-train1}}$ and $\mathcal{D}_{f\textnormal{-train2}}$ serve as the calibration dataset and test dataset respectively, to obtain the smoothened conformal $p$-value for training.

Next, to smooth the BH procedure, we first apply the soft-sorting operation to the smooth $p$-values  $\bar{p}_j^\theta$ to obtain their corresponding ranks $a_j^\theta$, 
\begin{align} \label{eq:rank_soft_p}
    a_j^\theta = \textnormal{soft-rank}\big(\bar{p}^\theta_j; \{\bar{p}^\theta_k\}_{k=1}^{m'}),
\end{align}
and then compute the softened selection size $\overline{S}(\theta)$ as
\begin{align*}
    \overline{S}(\theta) \overset{(i)}{=} \log \sum_{j=1}^{m'} e^{a_j^\theta s_j^\theta}, \  \text{where} \ 
    s_j^\theta \overset{(ii)}{=} \sigma \Big(\frac{q \cdot a_j^\theta / m' - \bar{p}^\theta_j}{\tau} \Big).
\end{align*}
In the above equation, the sigmoid function $\sigma$ with temperature coefficient $\tau$ in $(ii)$ serves as a smooth approximation of the indicator function $\mathbbm{1}\{\bar{p}_j^\theta < qa_j^\theta /m'\}$, while the log-sum-exp function in $(i)$ approximates the element-wise max function $\textnormal{max}(a_1^\theta s_1^\theta, \dots, a_{m'}^\theta s_{m'}^\theta)$. This formulation ensures that $\overline{S}(\theta)$ closely approximates the BH selection size, as the selection size is defined by the largest rank with a $p$-value below the threshold.

To maximize the selection size, the loss function for learning the \texttt{mCS-learn} score can be defined as
\begin{align} \label{eq:loss}
    L_1(\theta) = -\overline{S}(\theta).
\end{align}
While above formulation of the loss $L_1$ is intuitive in its attempt to approximate the final selection size using differentiable functions, the inclusion of two soft sorting steps may reduce numerical stability and impede the training process.

In the mCS procedure, the BH procedure is solely intended for multiplicity correction for FDR control, and thus is not necessarily required in the formulation of the loss function, whose primary objective is to learn the function $f_{\theta}$.  A simpler yet effective alternative loss function directly penalizes the smooth $p$-values $\bar{p}_j^\theta$ through the following loss function:
\begin{equation} \label{eq:alt_loss}
    L_2(\theta) = \sum_{j=1}^{m'} \bar{p}_j^\theta \big[ \mathbbm{1}(\boldsymbol{y}_{n+j} \in R) - \gamma \cdot \mathbbm{1}(\boldsymbol{y}_{n+j} \in R^c) \big].
\end{equation}
where $\bar{p}^\theta_j$ is minimized when the $j$-th sample is deemed desirable, as indicated by the first term. The second term, scaled by a balancing coefficient $\gamma$, ensures that the $p$-value is not uniformly small but becomes relatively larger for less favorable samples. 
This approach eliminates the need to estimate the BH threshold via a secondary soft-ordering~\eqref{eq:rank_soft_p}, leading to improved numerical stability and enhanced overall performance.
A comparison of the two loss functions can be found in Appendix~\ref{extra_experiments:loss_comp}. 
After computing the loss in each epoch, we can follow the standard backpropagation procedure to train $\theta$.

\paragraph{Validation Step.} To avoid overfitting  $f_{\theta}$, we perform an additional model selection procedure using a hold-out dataset $\mathcal{D}_{f\textnormal{-val}}$. Specifically, for each epoch $t=1, \dots, T$ of the backpropagation procedure, we apply $K$ random partitions on $\mathcal{D}_{f\textnormal{-val}}$ to obtain $\mathcal{D}_{f\textnormal{-val1}}^{(k)}$ and  $\mathcal{D}_{f\textnormal{-val2}}^{(k)}$ for $k = 1, \dots, K$. For each $k$, we then apply Algorithm~\ref{algo:mcs} with the setting $\mathcal{D}_{\textnormal{cal}}\coloneqq\mathcal{D}_{f\textnormal{-val1}}^{(k)}$ and $\mathcal{D}_{\textnormal{test}}\coloneqq\mathcal{D}_{f\textnormal{-val2}}^{(k)}$, and record the validation power $\rho_k(t)$. We then compute the average validation power $\overline{\text{Power}}(t)=\sum_{k=1}^{K}\rho_{k}(t)/K$.

In the end, we select $\bar{t}$ from $\{1, \dots, T\}$ to be the epoch with the highest average validation selection power $\overline{\text{Power}}(t)$, and deploy the associated model $f_{\theta_{\bar{t}}}$ for the final selection.

Finally, Algorithm~\ref{algo:training} details the complete learning procedure for \texttt{mCS-learn} scores.


\begin{algorithm}[h]
\caption{\texttt{mCS-learn} Learning Procedure}
\label{algo:training}
\begin{algorithmic}[1]
    
    \REQUIRE{
    Training data $\mathcal{D}_{f\textnormal{-train}}$, validation data $\mathcal{D}_{f\textnormal{-val}}$, target region $R$, FDR level $q \in (0,1)$, other hyperparameters.}
    \ENSURE{Trained nonconformity function $f_\theta$.}\
    \STATE Initialize parameters $\theta = \theta_0$.
    \FOR{epoch $t=1, \dots, T$}
        \begin{tcolorbox}[myalgoboxSkyBlue, title={Training Step}]
        \STATE Randomly partition $\mathcal{D}_{f\textnormal{-train}}$ into two disjoint subsets $\mathcal{D}_{f\textnormal{-train1}}$ and $\mathcal{D}_{f\textnormal{-train2}}$.
        \STATE Use the current $f_{\theta}$ to obtain $V_i^\theta$ from $\mathcal{D}_{f\textnormal{-train1}}$ and $\widehat{V}^\theta_{n'+j}$ from $\mathcal{D}_{f\textnormal{-train2}}$.
        \STATE Compute the smooth conformal $p$-values $\bar{p}_j^\theta$~\eqref{eq:soft_p}.
        \STATE Compute the loss function using~\eqref{eq:loss} or~\eqref{eq:alt_loss}.
        \STATE Back-propagate to update the parameters $\theta=\theta_t$.
        \end{tcolorbox}
        \begin{tcolorbox}[myalgoboxLavender, title={Validation Step}]
        \STATE Apply $K$ random partitions on $\mathcal{D}_{f\textnormal{-val}}$ to obtain $\mathcal{D}_{f\textnormal{-val1}}^{(k)}$ and  $\mathcal{D}_{f\textnormal{-val2}}^{(k)}$ for each $k=1, \dots, K$.
        \STATE 
        For each $k$, apply Algorithm~\ref{algo:mcs} for score function $V^{\theta}$ with the setting $\mathcal{D}_{\textnormal{cal}} \coloneqq\mathcal{D}^{(k)}_{f\textnormal{-val1}}$, $\mathcal{D}_{\textnormal{test}}\coloneqq \mathcal{D}^{(k)}_{f\textnormal{-val2}}$ and compute validation power $\rho_{k}(t)$.
        \STATE Compute $\overline{\text{Power}}(t)=\sum_{k=1}^{K}\rho_{k}(t)/K$.
        \end{tcolorbox}
    \ENDFOR
    \STATE Determine $\bar{t}=\arg\max_{t}\overline{\text{Power}}(t)$ and return $f_{\theta_{\bar{t}}}$.
\end{algorithmic}
\end{algorithm}

\section{Simulation Studies} \label{sec:simu}
\subsection{Baseline Methods} \label{sec:baseline}

While the standard CS approach is originally designed for univariate settings and cannot be directly applied to multivariate selections, appropriate adaptations can be devised. In this section, we introduce several na\"ive methods directly adapted from CS to address the multivariate case. Later, we employ these adapted methods as baselines and compare their performance against our proposed method.


In the scenario when the target region $R \subseteq \mathbb{R}^d$ is rectangular, the overall selection criterion can be decomposed, allowing each dimension to be evaluated independently.
By applying CS separately to each dimension, we obtain $d$ selection sets $\mathcal{S}_1, \dots, \mathcal{S}_d$, where each $\mathcal{S}_k$ contains observations satisfying the $k$-th corresponding marginal criterion. The final selection set $S$ is then given by the intersection of these individual sets, $\mathcal{S} = \cap_{k=1}^d \mathcal{S}_k$. We refer to this approach as \texttt{CS\_int}.

It can be shown that \texttt{CS\_int}, when each CS subroutine is conducted at a nominal level $q$, fails to control the FDR at or below $q$. This issue is analogous to the intersection hypothesis testing (IHT) problem in statistics. A common approach to address this is the Bonferroni correction \cite{dunn1961multiple}, which adjusts the nominal level of each subroutine to a lower threshold $q/d$. Then obtain the intersection of individual sets. However, this method is widely recognized as being overly conservative \cite{perneger1998s, westfall1993resampling}. We refer to the Bonferroni-adjusted \texttt{CS\_int} as \texttt{CS\_ib}. 
An alternative to the Bonferroni correction is to adaptively account for intersection hypothesis testing. Rather than predefining the nominal levels for subroutines, we determine suitable values by validating on a hold-out dataset. This approach necessitates additional data splits to construct the hold-out set, and we refer to it as \texttt{CS\_is}.

Beyond considering each dimension separately, another natural adaptation of CS involves transforming the response vector $\boldsymbol{y}$ before applying CS. Specifically, each response $\boldsymbol{y}_i$ is converted to a binary indicator reflecting whether it meets the selection criterion, defined as $\Tilde{y}_i = \mathbbm{1}\{\boldsymbol{y}_i \in R\}$. Under this transformation, the new selection threshold can be set to $c = 0$, as $\Tilde{y}_i > 0$ is equivalent to $\boldsymbol{y}_i \in R$. We refer to this approach as \texttt{bi}.    

\subsection{Numerical Results} \label{subsec:simu}

We compare the performance of \texttt{mCS-dist}, \texttt{mCS-learn} (abbreviated as \texttt{mCS-d} and \texttt{mCS-l} respectively) against the baseline methods outlined in Section~\ref{sec:baseline}.    

In our data generation processes, covariates $\boldsymbol{x}$ are sampled uniformly from $\text{Unif}(-1, 1)^p$ where $p$ is the covariate dimension, and the responses $\boldsymbol{y}$ are generated as $\boldsymbol{y} = \mu(\boldsymbol{x}) + \boldsymbol{\epsilon}$, where $\mu$ denotes the regression function and $\boldsymbol{\epsilon}$ represents noise drawn from either a multivariate Gaussian or multivariate $t$-distribution. By varying the regression function, the size of response dimensions, and the choice of Gaussian or heavy-tailed noise, we create a range of selection problems with differing levels of difficulty.

We consider two selection tasks where the target region $R$ is defined as:
\begin{enumerate}[itemsep=-1pt, topsep=1pt, leftmargin=35pt]
    \item[\textbf{Task 1}.] The (shifted) first orthant, $R = \{\boldsymbol{y}: y_k \geq c_k \hspace{0.2cm} \forall k\}$,    \item[\textbf{Task 2}.] A sphere centered at $\boldsymbol{c}$, $R = \{\boldsymbol{y}: \|\boldsymbol{y} - \boldsymbol{c}\|_2 \leq r\}$.
\end{enumerate}
These two specific tasks are particularly relevant in  applications, as they simulate scenarios where (1) $d$ criteria must be simultaneously satisfied or (2) an instance must be sufficiently close to a specific point $\boldsymbol{c}$ to be deemed acceptable.

In our simulation, the coefficients $\boldsymbol{c}$ and $r$ for each selection problem are chosen to ensure that approximately 15\% to 35\% of the data points lie within $R$ across all six data-generating processes. Detailed descriptions of the data generation process, model specifications, and the specific values of the coefficients are provided in Appendix~\ref{sec:data_gen}.
We first train a support vector regression model $\hat\mu$ using 1000 data points, and use an additional labeled dataset of 1000 samples to construct selection sets for different methods in comparison. We evaluate the selection power and FDR using a test dataset of size 100. 
We adopt the clipped score \eqref{eq:dist_score_2} for \texttt{mCS-dist}, and adopt the loss function in~\eqref{eq:alt_loss} with balancing coefficient $\gamma = 0.5$ for \texttt{mCS-learn}. 
For \texttt{mCS-learn}, the calibration data is split to $\mathcal{D}_{f\textnormal{-train}}, \mathcal{D}_{f\textnormal{-val}}$ and $\mathcal{D}'_{\textnormal{cal}}$ with ratio 8:1:1, and the model $f_\theta$
is formulated as a two-layer MLP with batch normalization.
The response dimension is set to be $d=30$, and nominal FDR level is set at $q=0.3$. Number of iterations for validation is set to $K=100$. The selection process is repeated across 100 iterations, with a new dataset generated independently for each iteration.

Table \ref{tab:fdp1} and Table \ref{tab:power1} summarize the experimental results for the first selection task. 
As shown in Table \ref{tab:fdp1}, \texttt{CS\_int} substantially violates FDR control. 
\texttt{CS\_is} provides only approximate FDR control, and in scenarios such as Setting 6, the FDR control may be compromised.
In each setting, the red numbers in Table \ref{tab:power1} indicate the highest achieved power among all methods that properly control the FDR. 
Among the four remaining methods that always maintain valid FDR control, our two proposed methods consistently achieve the best and second-best power, outperforming the baseline methods under all settings.

Table \ref{tab:power_and_fdp2} summarizes similar results for the second selection task. Baseline methods \texttt{CS\_int} and \texttt{CS\_ib} are not included as they are not applicable to non-rectangular target regions. Figure \ref{fig:varying_q} shows the realized FDR and power curves across varying nominal FDR levels, ranging from 0.05 to 0.5 in increments of 0.05. Results are shown exclusively for Setting 3 due to space constraints. Among the methods compared, \texttt{mCS-dist}, \texttt{mCS-learn}, and \texttt{bi} demonstrate consistent FDR control, as their respective curves remain below the black dashed line indicating the nominal FDR threshold. Notably, \texttt{mCS-learn} also achieves consistently higher power across all nominal levels.

Additional simulation results, including ablation studies exploring various factors of the selection problem and our algorithm, are provided in Appendix~\ref{extra_experiments}.

\setlength{\tabcolsep}{2pt} 
\begin{table}[htbp]
\centering
\small
\caption{Observed FDR for Task 1 (shifted first orthant $R$).  The nominal FDR level is $q = 0.3$.}

\label{tab:fdp1}
\begin{adjustbox}{max width=0.7\textwidth}
\begin{tabular}{lcccccc}
\toprule

Setting & \texttt{CS\_int} & \texttt{CS\_ib} & \texttt{CS\_is} & \texttt{bi} & \texttt{mCS-d} & \texttt{mCS-l}\\
\midrule
1 & $0.773$ & $0.000$ & $0.280$ & $0.240$ & $0.277$ & $0.251$ \\
2 & $0.801$ & $0.000$ & $0.133$ & $0.300$ & $0.315$ & $0.266$ \\
3 & $0.724$ & $0.000$ & $0.204$ & $0.295$ & $0.264$ & $0.278$ \\
4 & $0.811$ & $0.000$ & $0.255$ & $0.309$ & $0.277$ & $0.315$ \\
5 & $0.810$ & $0.000$ & $0.300$ & $0.266$ & $0.308$ & $0.239$ \\
6 & $0.778$ & $0.000$ & $0.374$ & $0.245$ & $0.287$ & $0.258$ \\
\bottomrule
\end{tabular}
\end{adjustbox}
\end{table}

\vspace*{-0.7em}

\begin{table}[htbp]
\centering
\small
\caption{Observed power for Task 1 (shifted first orthant $R$).}

\label{tab:power1}
\begin{adjustbox}{max width=0.7\textwidth}
\begin{tabular}{lcccccc}
\toprule

Setting & \texttt{CS\_int} & \texttt{CS\_ib} & \texttt{CS\_is} & \texttt{bi} & \texttt{mCS-d} & \texttt{mCS-l}\\
\midrule
1 & $1.000$ & $0.000$ & $0.406$ & $0.324$ & $\color{red}0.555$ & $0.325$ \\
2 & $1.000$ & $0.000$ & $0.012$ & $0.069$ & $0.104$ & $\color{red}0.108$ \\
3 & $1.000$ & $0.000$ & $0.039$ & $0.059$ & $0.068$ & $\color{red}0.102$ \\
4 & $1.000$ & $0.000$ & $0.124$ & $0.194$ & $\color{red}0.324$ & $0.198$ \\
5 & $1.000$ & $0.000$ & $0.019$ & $0.035$ & $\color{red}0.060$ & $0.042$ \\
6 & $1.000$ & $0.000$ & $0.101$ & $0.027$ & $\color{red}0.046$ & $0.034$ \\
\bottomrule
\end{tabular}
\end{adjustbox}
\end{table}

%


%


\vspace*{-0.7em}

\begin{table}[htbp]
\centering
\small
\setlength{\tabcolsep}{4pt}
\caption{Observed FDR and power for Task 2 (spherical $R$). The nominal FDR level is $q = 0.3$.}
\label{tab:power_and_fdp2}

\begin{tabular}{lcccccc}
\toprule
& \multicolumn{3}{c}{FDR} & \multicolumn{3}{c}{Power} \\
\cmidrule(lr){2-4} \cmidrule(lr){5-7}
Setting & \texttt{bi} & \texttt{mCS-d} & \texttt{mCS-l} & \texttt{bi} & \texttt{mCS-d} & \texttt{mCS-l} \\
\midrule
1 & $0.255$ & $0.265$ & $0.279$ & $0.636$ & $\color{red}0.760$ & $0.534$ \\
2 & $0.260$ & $0.263$ & $0.273$ & $0.343$ & $0.405$ & $\color{red}0.421$ \\
3 & $0.216$ & $0.223$ & $0.263$ & $0.115$ & $0.134$ & $\color{red}0.179$ \\
4 & $0.316$ & $0.286$ & $0.254$ & $0.192$ & $\color{red}0.333$ & $0.180$ \\
5 & $0.307$ & $0.291$ & $0.273$ & $0.137$ & $0.170$ & $\color{red}0.189$ \\
6 & $0.292$ & $0.283$ & $0.207$ & $0.055$ & $\color{red}0.063$ & $0.061$ \\
\bottomrule
\end{tabular}
\end{table}

\begin{figure*}[!h]
    \centering
    \includegraphics[width=1\textwidth]{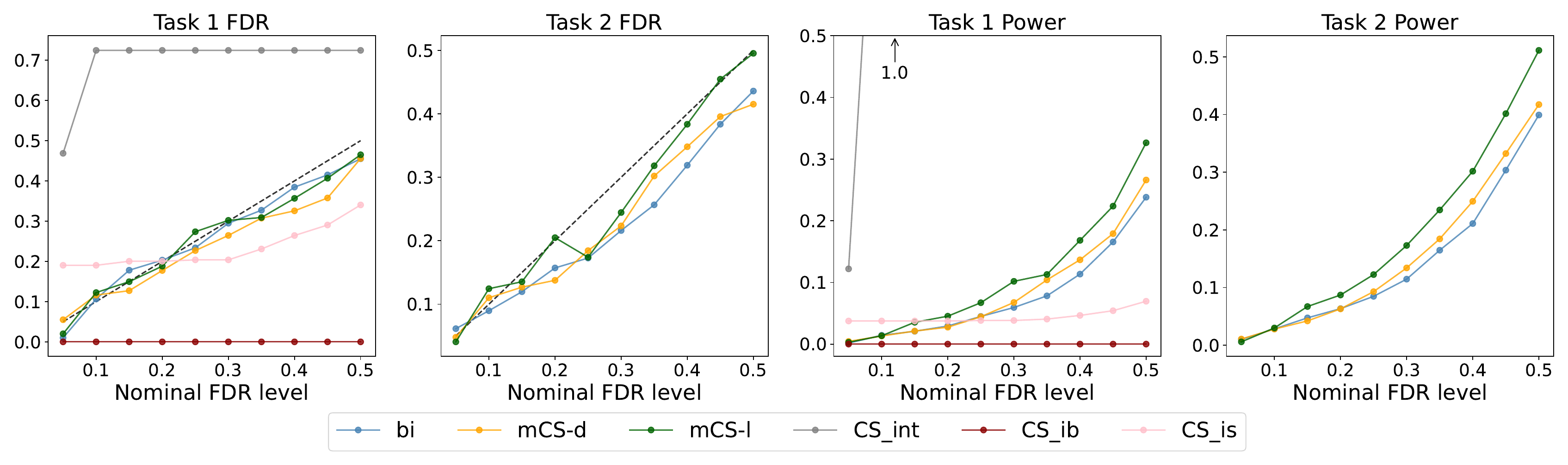}
    \caption{Observed FDR and power across varying nominal levels for Task 1 (shifted first orthant $R$) and 2 (spherical $R$).}
    \label{fig:varying_q}
\end{figure*}


\section{Real Data Application} \label{sec:real_data}

We apply the mCS framework to drug discovery, selecting drug candidates with desirable biological properties. Each candidate corresponds to a chemical compound, where the feature vector $\boldsymbol{x}$ encodes structural or chemical characteristics, and the multivariate response $\boldsymbol{y}$ represents biological properties. The multidimensional nature of $\boldsymbol{y}$ reflects the need to evaluate multiple biological criteria before advancing a compound. Ensuring FDR control improves downstream processes, such as wet-lab validation.

We employ an imputed public ADMET dataset compiled from multiple sources \cite{wenzel2019predictive, iwata2022predicting, kim2023pubchem, watanabe2018predicting, falcon2022reliable, esposito2020combining, braga2015pred, aliagas2022comparison, perryman2020pruned, meng2022boosting, vermeire2022predicting}, comprising
$n = 22805$ compounds with $d=15$ biological assay responses. We focus on three different selection tasks with the following target regions: 
\begin{enumerate}[itemsep=-1pt, topsep=1pt, leftmargin=35pt]
    \item[\textbf{Task 1}.] The (shifted) first orthant, $R = \{\boldsymbol{y}: y_k \geq c_k \hspace{0.2cm} \forall k\}$,
    \item[\textbf{Task 2}.] A sphere centered at $\boldsymbol{c}$, $R = \{\boldsymbol{y}: \|\boldsymbol{y} - \boldsymbol{c}\|_2 \leq r\}$,
    \item[\textbf{Task 3}.] The complement of a sphere centered at $\boldsymbol{c}$, $R = \{ \boldsymbol{y}: \|\boldsymbol{y} - \boldsymbol{c}\|_2 \geq r'\}$.
\end{enumerate}
We included the two tasks introduced in the numerical simulations (Section~\ref{sec:simu}), and we also designed the third task to evaluate the performance of various methods under a nonconvex target region with real data.
Each of the selection task is designed so that approximately 15\%–30\% of the compounds qualify as acceptable. 
Further details on the dataset and problem setup, including response descriptions and cutoffs, can be found in Appendix~\ref{real_results}. 

In this experiment, the underlying predictor $\hat\mu$ is specified as a drug property prediction model from the \texttt{DeepPurpose} Python package~\cite{huang2020deeppurpose} with \texttt{Morgan} drug encoding. We train the model using $n_\textnormal{train} = 12000$ samples, provide $n_\textnormal{cal} = 8000$ samples for calibration and reserves the remaining data of size $n_\textnormal{test} = 2805$ as test data. 
We keep the configuration and hyperparameters of the methods unchanged as in Section~\ref{sec:simu}.
Two nominal levels $q = 0.3$ and $q = 0.5$ are considered, and the selection processes for each method are repeated across 500 iterations.

\setlength{\tabcolsep}{2pt} 
\begin{table}[htbp]
\centering
\small
\caption{Observed FDR of different methods with real data.}

\label{tab:real_fdp}
\begin{adjustbox}{max width=0.7\textwidth}
\begin{tabular}{lccccccc}
\toprule

Task & $q$ & \texttt{CS\_int} & \texttt{CS\_ib} & \texttt{CS\_is} & \texttt{bi} & \texttt{mCS-d} & \texttt{mCS-l}\\
\midrule
1 & $0.3$ & $0.760$ & $0.000$ & $0.303$ & $0.038$ & $0.304$ & $0.275$ \\
2 & $0.3$ & $-$ & $-$ & $-$ & $0.000$ & $0.300$ & $0.293$ \\
3 & $0.3$ & $-$ & $-$ & $-$ & $0.084$ & $0.301$ & $0.296$ \\
\midrule
1 & $0.5$ & $0.782$ & $0.393$ & $0.496$ & $0.040$ & $0.499$ & $0.488$ \\
2 & $0.5$ & $-$ & $-$ & $-$ & $0.000$ & $0.499$ & $0.498$ \\
3 & $0.5$ & $-$ & $-$ & $-$ & $0.084$ & $0.501$ & $0.497$ \\
\bottomrule
\end{tabular}
\end{adjustbox}
\end{table}

\begin{table}[htbp]
\centering
\small
\caption{Observed power of methods with real data.}

\label{tab:real_power}
\begin{adjustbox}{max width=0.7\textwidth}
\begin{tabular}{lccccccc}
\toprule

Task & $q$ & \texttt{CS\_int} & \texttt{CS\_ib} & \texttt{CS\_is} & \texttt{bi} & \texttt{mCS-d} & \texttt{mCS-l}\\
\midrule
1 & $0.3$ & $0.993$ & $0.000$ & $0.019$ & $0.000$ & $0.006$ & ${\color{red}0.010}$ \\
2 & $0.3$ & $-$ & $-$ & $-$ & $0.000$ & ${\color{red}0.278}$ & $0.086$ \\
3 & $0.3$ & $-$ & $-$ & $-$ & $0.000$ & $0.410$ & ${\color{red}0.431}$ \\
\midrule
1 & $0.5$ & $1.000$ & $0.003$ & $0.225$ & $0.000$ & ${\color{red}0.433}$ & $0.193$ \\
2 & $0.5$ & $-$ & $-$ & $-$ & $0.000$ & ${\color{red}0.759}$ & $0.515$ \\
3 & $0.5$ & $-$ & $-$ & $-$ & $0.000$ & $0.449$ & ${\color{red}0.589}$ \\
\bottomrule
\end{tabular}
\end{adjustbox}
\end{table}

Tables~\ref{tab:real_fdp} and Table~\ref{tab:real_power} summarize the FDR and power of the competing methods respectively. 
Consistent with our numerical simulation, the methods \texttt{bi}, \texttt{mCS-dist}, and \texttt{mCS-learn} all demonstrate valid FDR control. 
Among the methods guaranteed to control FDR, \texttt{mCS-dist} and \texttt{mCS-learn} consistently achieve the best and the second-best power across all tasks and nominal levels.
Notably, \texttt{mCS-learn} exhibits superior performance under nonconvex target regions, corroborating the results presented in Appendix~\ref{nonconvex}.
We note that although the baseline method \texttt{bi} performed well in the simulation settings, it barely selected any compound in the current task. This outcome may be attributed to the suboptimal performance of the underlying binary classification model, which achieved an F1 score of only 0.31.

\section{Conclusion}

We propose multivariate conformal selection, an extension of conformal selection to multivariate response settings. Our experiments demonstrate that mCS significantly improves selection power while maintaining rigorous FDR control, outperforming existing baselines across simulated and real-world datasets. The flexibility of mCS makes it a valuable tool for selective tasks involved in diverse fields including drug discovery. Looking forward, we anticipate that the mCS framework can be further extended to handle additional practical challenges, including settings with hierarchical or structured responses. By addressing these challenges, mCS has the potential to further enhance its applicability and impact across diverse scientific and industrial domains.

\section{Impact Statement}

This paper aims to advance the field of machine learning. While there may be societal impacts, none require specific attention here.

\clearpage
\bibliography{ref}
\bibliographystyle{icml2025}

\newpage
\appendix
\onecolumn
\section{Technical Proofs}

\subsection{Proof of Proposition~\ref{prop:conservative}}
\label{proof:conservative}

\begin{proof}[Proof of Proposition~\ref{prop:conservative}]
    Since $V$ is fixed, the nonconformity scores $V_1, \dots, V_n$ and $V_{n+j}$ are exchangeable. By a standard result from conformal inference~\citet[Proposition~2.4]{Vovk2005}, the oracle $p$-value $p_j^*$ defined as in (\ref{eq:oracle_p}) is uniform distributed with value ranging in $(0,1)$, and $\mathbb{P}(p_j^* \leq \alpha) \leq \alpha$. This gives
    \begin{align*}
        \mathbb{P}(p_j^* \leq \alpha \text{ and } j \in \mathcal{H}_0) \leq \alpha.
    \end{align*}
    When the null hypothesis $H_{0j}$ is true, $\boldsymbol{y}_{n+j} \in R^c$. Since $\boldsymbol{r}_{n+j} \in R$, by the regional monotone property, we have $V_{n+j} = V(\boldsymbol{x}_{n+j}, \boldsymbol{y}_{n+j}) \leq V(\boldsymbol{x}_{n+j}, \boldsymbol{r}_{n+j}) = \widehat{V}_{n+j}$. We then have $p_j^* \leq p_j$ by definition, and
    \begin{align*} 
        \mathbb{P}(p_j \leq \alpha \text{ and } j \in \mathcal{H}_0) \leq \mathbb{P}(p_j^* \leq \alpha \text{ and } j \in \mathcal{H}_0) \leq \alpha.
    \end{align*}
\end{proof}

\subsection{Proof of Theorem~\ref{thm:main}}
\label{proof:main}
\begin{proof}[Proof of Theorem~\ref{thm:main}]
    We adapt the proof of Theorem~6 in \cite{jin2023selection}.  In the proof that follows, we fix index $j \in \{1, \dots, m\}$.
    For notational simplicity, only in this proof we deal with the \textit{deterministic} conformal $p$-values, 
    \[ p_j = \frac{1}{n+1} \big[ 1 + \sum_{i=1}^n \mathbbm{1} \{ V_i \leq \widehat{V}_{n+j} \}\big] \]
    We note that the deterministic conformal $p$-values are only valid when the scores $\{V_i\}_{i=1}^n \cup\{V_{n+j}\}_{j=1}^m$ have no ties almost surely, and therefore we also make this assumption. We highlight that the validity of the \textit{random} conformal $p$-value does not rely on this statement.
    We also define the corresponding deterministic oracle $p$-values, 
    \[ p_j^* = \frac{1}{n+1} \big[ 1 + \sum_{i=1}^n \mathbbm{1} \{ V_i \leq V_{n+j} \}\big]. \]
    This version of $p$-values are also conservative by standard results in conformal inference~\citep{Bates2023, jin2023selection}.
    For $l = 1, \dots, m$, we define a set of slightly modified $p$-values,
    \[ p_l^{(j)} = \frac{1}{n+1} \big[ \sum_{i=1}^n \mathbbm{1} \{ V_i \leq \widehat{V}_{n+l} \} + \mathbbm{1}\{ V_{n+j} \leq \widehat{V}_{n+l} \} \big]. \]
    Also define $\mathcal{S}(a_1, \dots, a_m) \subseteq \{ 1, \dots, m\}$ as the rejection index set obtained by the Benjamini-Hochberg procedure, from $p$-values taking values $a_1, \dots, a_m$.  Then, the output of mCS is $\mathcal{S}(p_1, \dots, p_m)$. 
    
    In the sequel, we will compare $\mathcal{S}(p_1, \dots, p_m)$ to 
    \[ \mathcal{S}(p_1^{(j)}, \dots, p_{j-1}^{(j)}, p_j^*, p_{j+1}^{(j)}, \dots, p_m^{(j)}) \]
    for the test sample $j$ that is falsely rejected, i.e. 
    \[ j \in \mathcal{S}, Y_{n+j} \in R^c. \]
    First, on this event, we have $V_{n+j} = V(\boldsymbol{x}_{n+j}, \boldsymbol{y}_{n+j}) \leq V(\boldsymbol{x}_{n+j}, \boldsymbol{r}_{n+j}) = \widehat{V}_{n+j}$ and $p_j^* \leq p_j$. For the remaining $p$-values, $\{ p_l^{(j)} \}_{l \neq j}$, we consider two cases:
    \begin{enumerate}
        \item[$($i$)$] If $\widehat{V}_{n+l} \geq \widehat{V}_{n+j}$, then $p_l \geq p_j$. In this case, we also have $\widehat{V}_{n+l} \geq V_{n+j}$ as $\widehat{V}_{n+j} \geq V_{n+j}$. This implies
        \[ p_l^{(j)} = \frac{1}{n+1}\big[ \sum_{i=1}^n \mathbbm{1}\{V_i \leq \widehat{V}_{n+l} \} + \mathbbm{1}\{V_{n+j} \leq \widehat{V}_{n+l}\} \big] = \frac{1}{n+1}\big[ \sum_{i=1}^n \mathbbm{1}\{V_i \leq \widehat{V}_{n+l} \} + 1 \big] =  p_l. \]
        \item[$($ii$)$] If $\widehat{V}_{n+l} < \widehat{V}_{n+j}$, then $p_l \leq p_j$. We also have
        \[ p_l^{(j)} \leq \frac{1}{n+1}\big[ \sum_{i=1}^n \mathbbm{1}\{V_i \leq \widehat{V}_{n+l} \} +1  \big] \leq \frac{1}{n+1}\big[ \sum_{i=1}^n \mathbbm{1}\{V_i \leq \widehat{V}_{n+j} \} + 1 \big] = p_j. \]
        Since $j \in \mathcal{S}$, by the definition of Benjamini-Hochberg procedure $l \in \mathcal{S}$ as $p_l$ has smaller rank when ordering the $p$-values.    \end{enumerate}
    To summarize, if we replace $(p_1, \dots, p_m)$ by $(p_1^{(j)}, \dots, p_{j-1}^{(j)}, p_j^*, p_{j+1}^{(j)}, \dots, p_m^{(j)})$ on the event $\{ j \in \mathcal{S}, Y_{n+j} \in R^c\}$, such a replacement does not modify any of those $p$-values $p_l$ if they satisfied $p_l \geq p_j$. Also, for all $p$-values $p_l$ with $p_l \leq p_j$ including $j$ itself ($l=j$), their replaced values $p_l^{(j)}$ are still no greater than $p_j$. 
    Since all $p$-values are no larger than their original values after the replacements, the size of rejection set must not decrease. On the other hand, since $j \in \mathcal{S}$ and no $p$-values larger than $p_j$ are modified, no new hypotheses can be rejected by the new set of $p$-values. We conclude that
    \begin{align*}
       \mathcal{S}_j^* := \mathcal{S}(p_1^{(j)}, \dots, p_{j-1}^{(j)}, p_j^*, p_{j+1}^{(j)}, \dots, p_m^{(j)}) = \mathcal{S}(p_1, \dots, p_m) = \mathcal{S}
    \end{align*}
    on the event $\{ Y_{n+j} \in R^c, j \in \mathcal{S} \}$. By decomposing the FDR we have
    \begin{align*}
        \text{FDR} &= \mathbb{E}\bigg[ \frac{\sum_{j=1}^{m} \mathbbm{1} \{ Y_{n+j} \in R^c\} \mathbbm{1}\{j \in \mathcal{S}\}}{\max(1, |\mathcal{S}|)} \bigg] \\
        &= \sum_{j=1}^{m} \sum_{k=1}^{m} \frac{1}{k} \mathbb{E}\big[ \mathbbm{1}\{|\mathcal{S}| = k\} \mathbbm{1}\{Y_{n+j}\in R^c\} \mathbbm{1}\{p_j \leq \frac{qk}{m}, j \in \mathcal{S} \}  \big] \\
        &\leq \sum_{j=1}^{m} \sum_{k=1}^{m} \frac{1}{k} \mathbb{E}\big[ \mathbbm{1}\{|\mathcal{S}_j^*| = k\} \mathbbm{1}\{Y_{n+j}\in R^c\} \mathbbm{1}\{p_j^* \leq \frac{qk}{m} \}  \big]  \\
        &\leq \sum_{j=1}^{m} \sum_{k=1}^{m} \frac{1}{k} \mathbb{E}\big[ \mathbbm{1}\{|\mathcal{S}_j^*| = k\} \mathbbm{1}\{p_j^* \leq \frac{qk}{m} \} \big] \\
        &= \sum_{j=1}^{m} \sum_{k=1}^{m} \frac{1}{k} \mathbb{E}\big[ \mathbbm{1}\{|\mathcal{S}_j^*| = k\} \mathbbm{1}\{j \in \mathcal{S}_j^* \} \big].
    \end{align*}
    The last equality is again by the property of the Benjamini-Hochberg procedure. Also, by its step-up nature, sending $p_j^*$ to zero does not change the rejection set if the corresponding hypothesis of $p_j^*$ is rejected, i.e. on the event $\{j \in \mathcal{S}_j^* \}$. We have
    \[ \mathcal{S}_j^* = \mathcal{S}(p_1^{(j)}, \dots, p_{j-1}^{(j)}, p_j^*, p_{j+1}^{(j)}, \dots, p_m^{(j)}) = \mathcal{S}(p_1^{(j)}, \dots, p_{j-1}^{(j)}, 0, p_{j+1}^{(j)}, \dots, p_m^{(j)}) =: \mathcal{S}_{j \rightarrow 0}^*  \]
    Thus,
    \begin{align*}
        \text{FDR} \leq \sum_{j=1}^{m} \sum_{k=1}^{m} \frac{1}{k} \mathbb{E}\big[ \mathbbm{1}\{|\mathcal{S}_{j \rightarrow 0}^*| = k\} \mathbbm{1}\{j \in \mathcal{S}_{j}^* \} \big] = \sum_{j=1}^m \mathbb{E}\bigg[ \frac{ \mathbbm{1} \{p_j^* \leq q|\mathcal{S}_{j}^*| / m \}}{\max(1, |\mathcal{S}_{j \rightarrow 0}^*|)} \bigg] \leq \sum_{j=1}^m \mathbb{E}\bigg[ \frac{ \mathbbm{1} \{p_j^* \leq q|\mathcal{S}_{j \rightarrow 0}^*| / m \}}{\max(1, |\mathcal{S}_{j \rightarrow 0}^*|)} \bigg]
    \end{align*}
    By definition, $\{ p_l^{(j)} \}_{l \neq j}$ is invariant after permuting $\{ V_i\}_{i=1}^n \cup \{ V_{n+j}\}$. Since $\{ V_i\}_{i=1}^n \cup \{ V_{n+j}\}$ are exchangeable conditioned on $\{ \widehat{V}_{n+l} \}_{l \neq j}$, the distribution of $\{ p_l^{(j)}\}_{l \neq j}$ is independent from the ordering of $\{ V_i\}_{i=1}^n \cup \{ V_{n+j}\}$ conditioned on the (unordered) set $[V_1, \dots, V_n, V_{n+j}] \cup \{\widehat{V}_{n+l}\}_{l \neq j}$. Also, conditioned on $\{\widehat{V}_{n+l}\}_{l \neq j}$, $\mathcal{S}_{j \rightarrow 0}^*$ only depends on $\{ p_l^{(j)}\}_{l \neq j}$ which is in turn only dependent on the unordered set $[V_1, \dots, V_n, V_{n+j}]$, and $p_j^*$ only depends on the ordering of $\{ V_i\}_{i=1}^n \cup \{ V_{n+j}\}$. This implies that $\mathcal{S}_{j \rightarrow 0}^*$ is independent on $p_j^*$ conditioned on the (unordered) set $[V_1, \dots, V_n, V_{n+j}]$ and $\{\widehat{V}_{n+l}\}_{l \neq j}$. Therefore, by the conservative property of conformal $p$-values and conditional independence, 
    \begin{align*}
        \mathbb{P} \Big(p_j^* \leq \frac{q|\mathcal{S}_{j \rightarrow 0}^*|}{m} \Bigm| [V_1, \dots, V_n, V_{n+j}] \cup \{\widehat{V}_{n+l}\}_{l \neq j} \Big) \leq \frac{ q|\mathcal{S}_{j \rightarrow 0}^*| }{m}.
    \end{align*}
    By the law of total expectation,
    \begin{align*}
        \mathbb{E}\bigg[ \frac{ \mathbbm{1} \{p_j^* \leq q|\mathcal{S}_{j \rightarrow 0}^*| / m \}}{\max(1, |\mathcal{S}_{j \rightarrow 0}^*|)} \bigg] 
        &= \mathbb{E}\Bigg[ \mathbb{E}\bigg[ \frac{ \mathbbm{1} \{p_j^* \leq q|\mathcal{S}_{j \rightarrow 0}^*| / m \}}{\max(1, |\mathcal{S}_{j \rightarrow 0}^*|)} \Bigm| [V_1, \dots, V_n, V_{n+j}] \cup \{\widehat{V}_{n+l}\}_{l \neq j} \bigg] \Bigg] \\
        &\leq \mathbb{E}\Big[ \mathbbm{1} \{ |\mathcal{S}_{j \rightarrow 0}^*| \neq 0 \} \frac{q|\mathcal{S}_{j \rightarrow 0}^*|}{m|\mathcal{S}_{j \rightarrow 0}^*|}  \Bigm| [V_1, \dots, V_n, V_{n+j}] \cup \{\widehat{V}_{n+l}\}_{l \neq j} \Big] \\
        &\leq \frac{q}{m}.
    \end{align*}
    Since when $|\mathcal{S}_{j \rightarrow 0}^*| = 0$, $\mathbbm{1} \{p_j^* \leq q|\mathcal{S}_{j \rightarrow 0}^*| / m \} = 0$. Now, the proof is concluded by summing over every $j = 1, \dots, m$.
\end{proof}

\subsection{Proof of Proposition~\ref{prop:expressive}}

\begin{proof}[Proof of Proposition~\ref{prop:expressive}]
\label{proof:expressive}
    For any specific dataset and selection problem, let $V^{\textnormal{opt}}$ denote the optimal nonconformity score that controls the FDR and maximize the number of selected samples. Define $V_i^{\textnormal{opt}} = V^{\textnormal{opt}}(\boldsymbol{x}_i, \boldsymbol{y}_i)$ for calibration samples and $\widehat{V}_{n+j}^{\textnormal{opt}} = V^{\textnormal{opt}}(\boldsymbol{x}_{n+j}, \boldsymbol{r}_{n+j})$ for test samples. Consider the following score $W$ within the family~\eqref{eq:score_class}:
    \begin{align*}
        W(\boldsymbol{x}, \boldsymbol{y}) = M \cdot\mathbbm{1}\{ \boldsymbol{y} \notin R^c \cup \partial R\} + V^{\textnormal{opt}}(\boldsymbol{x}, \boldsymbol{y}).
    \end{align*}
    Picking $\boldsymbol{r}_{n+j} \equiv \boldsymbol{r} \in \partial R$, the test nonconformity scores $\widehat{W}_{n+j}$ satisfy:
    \begin{align*}
        \widehat{W}_{n+j} = W(\boldsymbol{x}_{n+j}, \boldsymbol{r}_{n+j}) = V^{\textnormal{opt}}(\boldsymbol{x}_{n+j}, \boldsymbol{r}_{n+j}) = \widehat{V}_{n+j}^{\textnormal{opt}},
    \end{align*}
    and for each calibration score,
    \begin{align*}
        W_i = W(\boldsymbol{x}_i, \boldsymbol{y}_i) \geq V^{\textnormal{opt}}(\boldsymbol{x}_i, \boldsymbol{y}_i) = V_i^{\textnormal{opt}}.
    \end{align*}
    Therefore, by replacing $V^{\textnormal{opt}}$ with $W$, the calibration nonconformity scores $W_1,\ldots,W_n$ may increase, while the test nonconformity scores $\widehat{W}_{n+j}$ remain unchanged. According to~\eqref{eq:conf_p}, this leads to a decrease in each conformal $p$-value $p_j$, which in turn increase the value $k^*$ in Algorithm~\ref{algo:mcs}. This means that we would select more samples, by the definition of $\mathcal{S}$. Consequently, we conclude that there must exist a score within the family~\eqref{eq:score_class} that achieves the optimal selection size. A similar argument applies to the maximization selection power, thereby concluding the proof.
\end{proof}

\section{Deferred Discussions}
\label{deferred_discussions}

\subsection{Discussions on Classification Responses}
\label{sec:cls_discussion}

In our paper, we choose to focus primarily on the regression response case for multivariate conformal selection.
In fact, the selection problem for classification responses (univariate or multivariate) can be directly reduced to the univariate conformal selection framework introduced by~\citet{jin2023selection}:

For the univariate classification setting, suppose the response space is composed of classes $\mathcal{Y} = \cup_{k=1}^K C_k$ with target region $R = \cup_{k=1}^{s} C_k$ (with $s < K$). Then, by defining a binary response: $\tilde{y}_{i} = \mathbbm{1}\{y_{i} \in R \}$, the original selection problem directly translates into a univariate conformal selection problem, where we select samples with $\tilde{y}_{i} = 1$.

For multivariate classification case, e.g.  suppose that bivariate responses $\boldsymbol{y}_{i}$ are drawn from a joint class space $\mathcal{Y} = (\mathcal{Y}^{(1)},\mathcal{Y}^{(2)}) = \cup_{k,\ell} (C^{(1)}_k, C^{(2)}_\ell)$, and the target region $R = \cup_{(k, \ell) \in \mathcal{I}}(C^{(1)}_k, C^{(2)}_\ell)$. Again, we define a binary indicator $\tilde{y}_{i} = \mathbbm{1}\{\boldsymbol{y}_{i} \in R \},$ 
converting the original multivariate selection problem into a standard univariate selection task: 
$$H_{0j}: \tilde{y}_{n+j} < 0.5 \text{\quad versus \quad } H_{1j}: \tilde{y}_{n+j} \geq 0.5.$$  

Since $P(\tilde{y}_{i} =1)=P(\boldsymbol{y}_{i} \in R)$, there is a direct correspondence between the multivariate and univariate nonconformity scores:
\[
V(\boldsymbol{x},\boldsymbol{y}) = M\cdot\mathbbm{1}\{\boldsymbol{y}\in R\} - \widehat{P}(\boldsymbol{y}\in R|\boldsymbol{x}) 
\]
and
\[
V(\boldsymbol{x},\tilde{y}) = M\cdot\mathbbm{1}\{\tilde{y}\geq0.5\} - \tilde{\mu}(\boldsymbol{x}) 
\]
where $\tilde{\mu}(\boldsymbol{x})\equiv \widehat{P}(\tilde{y}=1|\boldsymbol{x})$. Moreover, regional monotonicity is simply the usual monotone condition of univariate conformal selection:
$V(\boldsymbol{x},\tilde{y}=0)\leq V(\boldsymbol{x},\tilde{y}=1).$

Thus, every classification-based selection task can be naturally and effectively solved using existing univariate conformal selection methods.  

\subsection{Additional Discussions on Theorem~\ref{thm:heuristic} and Advantages of
The Clipped Nonconformity Score}
\label{sec:adv_clip}

In this section, we first provide interpretations about Theorem~\ref{thm:heuristic} and offer an alternative perspective on the advantages of the clipped nonconformity score.


We first note that the (practical) conformal $p$-values defined in~\eqref{eq:conf_p} can be rewritten as
\begin{align*}
    p_j &= \frac{1}{n+1} \sum_{i=1}^n\mathbbm{1}\{V_i < \widehat{V}_{n+j} \} + \frac{1}{n+1} U_j \sum_{i=1}^n \mathbbm{1}\{V_i = \widehat{V}_{n+j}\} + \frac{1}{n+1} U_j .
\end{align*}
By the Glivenko-Cantelli theorem (the uniform strong law of large numbers), as $n\rightarrow\infty$,
\begin{align*}
    \sup_{t\in\mathbb{R}} \left| \frac{1}{n+1} \sum_{i=1}^n\mathbbm{1}\{V_i < t \} + \frac{1}{n+1} U_j \sum_{i=1}^n \mathbbm{1}\{V_i = t\} + \frac{1}{n+1} U_j - \mathbb{P}(V(\boldsymbol{x}, \boldsymbol{y}) < t ) - U_j\cdot \mathbb{P}(V(\boldsymbol{x}, \boldsymbol{y}) = t) \right| \overset{a.s.}{\rightarrow} 0.
\end{align*}
Replacing $t$ in the above by $\widehat{V}_{n+j}=V(\boldsymbol{x}_{n+j}, \boldsymbol{r})$ yields
\begin{align*}
    p_j &\overset{a.s.}{\rightarrow} \mathbb{P}(V(\boldsymbol{x}, \boldsymbol{y}) < V(\boldsymbol{x}_{n+j}, \boldsymbol{r}) | V(\boldsymbol{x}_{n+j}, \boldsymbol{r}) ) + U_j\cdot \mathbb{P}(V(\boldsymbol{x}, \boldsymbol{y}) = V(\boldsymbol{x}_{n+j}, \boldsymbol{r}) | V(\boldsymbol{x}_{n+j}, \boldsymbol{r}) )
    = F(V(\boldsymbol{x}_{n+j}, \boldsymbol{r}), U_j) \\
    & \overset{d}{\sim} F(V(\boldsymbol{x}, \boldsymbol{r}), U).
\end{align*}
Therefore, $F(V(\boldsymbol{x}, \boldsymbol{r}), U)$ is also a conservative random variable.

The quantity $t^*$ in Theorem~\ref{thm:heuristic} can also be viewed as the asymptotic counterpart of a corresponding finite-sample quantity. A characterization of the BH procedure~\citep{storey2004strong} states that the rejection set $\mathcal{S}=\{j: p_j \leq \tau^*\}$,  where the BH rejection threshold $\tau^*$ is defined as
\begin{align*}
    \tau^* = \sup \Big\{ t\in[0,1]: \frac{t}{\frac{1}{m}\sum_{j=1}^m \mathbbm{1}\{p_j \leq t\}} \leq q \Big\}.
\end{align*}
By the fact that $\frac{1}{m}\sum_{j=1}^m \mathbbm{1}\{p_j \leq t\} - \mathbb{P}(p_j \leq t) \overset{p}{\rightarrow} 0$ as $m\rightarrow \infty$ (which follows due to the the weak law of large numbers for triangular arrays), and our earlier characterization of $F(V(\boldsymbol{x}, \boldsymbol{r}), U_j)$ which implies $\mathbb{P}(p_j \leq t) \rightarrow \mathbb{P}( F(V(\boldsymbol{x}, \boldsymbol{r}), U) \leq t)$ as $n\rightarrow \infty$, the fraction in the definition of $\tau^*$ satisfies (in the limit $n, m\rightarrow \infty$)
\begin{align*}
    \frac{t}{\frac{1}{m}\sum_{j=1}^m \mathbbm{1}\{p_j \leq t\}} \overset{p}{\rightarrow} 
    \frac{t}{\mathbb{P}(F(V(\boldsymbol{x}, \boldsymbol{r}), U) \leq t)}.
\end{align*}
This establishes $t^*$ as the asymptotic limit of the BH rejection threshold $\tau^*$, which further implies
\begin{align*}
    \frac{1}{m}\sum_{j=1}^m \mathbbm{1}\{p_j \leq \tau^*\} \overset{p}{\rightarrow} \mathbb{P}(F(V(\boldsymbol{x}, \boldsymbol{r}), U) \leq t^*) ~~\text{and}~~ \frac{1}{m} \sum_{j=1}^m \mathbbm{1}\{p_j \leq \tau^* \textnormal{ and } \boldsymbol{y}_{n+j} \in R\} \overset{p}{\rightarrow} \mathbb{P}(F(V(\boldsymbol{x}, \boldsymbol{r}), U) \leq t^*, \boldsymbol{y} \in R) .
\end{align*}
Theorem~\ref{thm:heuristic} then gives the asymptotic version of FDR and power:
\begin{align*}
    \text{FDR} &= \mathbb{E}\Bigg[\frac{|\mathcal{S} \cap \{j: \boldsymbol{y}_{n+j} \in R^c\}|}{|\mathcal{S}|} \Bigg]
    = \mathbb{E}\Bigg[\frac{|\{j: p_j \leq \tau^*\} \cap \{j: \boldsymbol{y}_{n+j} \in R^c\}|}{|\{j: p_j \leq \tau^*\}|} \Bigg] \\
    &= \mathbb{E}\Bigg[\frac{ \frac{1}{m} \sum_{j=1}^m \mathbbm{1}\{p_j \leq \tau^* \textnormal{ and } \boldsymbol{y}_{n+j} \in R^c\} }{\frac{1}{m} \sum_{j=1}^m \mathbbm{1}\{p_j \leq \tau^*\}} \Bigg] \\
    & 
    \rightarrow \frac{\mathbb{P}(F(V(\boldsymbol{x}, \boldsymbol{r}), U) \leq t^*, \boldsymbol{y} \in R^c)}{\mathbb{P}(F(V(\boldsymbol{x}, \boldsymbol{r}), U) \leq t^*)} ,
\end{align*}
\begin{align*}
    \text{Power} &= \mathbb{E}\Bigg[\frac{|\mathcal{S} \cap \{j: \boldsymbol{y}_{n+j} \in R\}|}{|\{j: \boldsymbol{y}_{n+j} \in R\}|} \Bigg]
    = \mathbb{E}\Bigg[\frac{|\{j: p_j \leq \tau^*\} \cap \{j: \boldsymbol{y}_{n+j} \in R\}|}{|\{j: \boldsymbol{y}_{n+j} \in R\}|} \Bigg] \\
    &= \mathbb{E}\Bigg[\frac{ \frac{1}{m} \sum_{j=1}^m \mathbbm{1}\{p_j \leq \tau^* \textnormal{ and } \boldsymbol{y}_{n+j} \in R\} }{\frac{1}{m} \sum_{j=1}^m \mathbbm{1}\{ \boldsymbol{y}_{n+j} \in R \}} \Bigg] \\
    &
    \rightarrow \frac{\mathbb{P}(F(V(\boldsymbol{x}, \boldsymbol{r}), U) \leq t^*, \boldsymbol{y} \in R)}{\mathbb{P}(\boldsymbol{y} \in R)}.
\end{align*}

Theorem~\ref{thm:heuristic} suggests that the clipped score should be preferred from the perspective of maximizing the realized (asymptotic) FDR (which is still controlled below the nominal level). The asymptotic FDR is bounded through the following chain of inequalities:
\begin{align} \label{eq:inequalities}
    \frac{\mathbb{P}(F(V(\boldsymbol{x}, \boldsymbol{r}), U) \leq t^*, \boldsymbol{y} \in R^c)}{\mathbb{P}(F(V(\boldsymbol{x}, \boldsymbol{r}), U) \leq t^*)} \nonumber
    &\leq \frac{\mathbb{P}(F(V(\boldsymbol{x}, \boldsymbol{y}), U) \leq t^*, \boldsymbol{y} \in R^c)}{\mathbb{P}(F(V(\boldsymbol{x}, \boldsymbol{r}), U) \leq t^*)}  \\
    &\overset{\textnormal{($*$)}}{\leq} \frac{\mathbb{P}(F(V(\boldsymbol{x}, \boldsymbol{y}), U) \leq t^*)}{\mathbb{P}(F(V(\boldsymbol{x}, \boldsymbol{r}), U) \leq t^*)}
    = \frac{t^*}{\mathbb{P}(F(V(\boldsymbol{x}, \boldsymbol{r}), U) \leq t^*)} .
\end{align}
By the definition of $t^*$, the final term in \eqref{eq:inequalities} is at most the nominal level $q$. If the bounds in this chain of inequalities could be further tightened, the realized FDR would more closely align with the nominal level $q$, effectively allowing for greater selection power as more of the FDR budget could be utilized.

Using the clipped score tightens the inequality ($*$) in \eqref{eq:inequalities}. Assuming the score function $V$ is clipped, we observe that
\begin{align*}
    \frac{\mathbb{P}(F(V(\boldsymbol{x}, \boldsymbol{y}), U) \leq t^*)}{\mathbb{P}(F(V(\boldsymbol{x}, \boldsymbol{r}), U) \leq t^*)} - \frac{\mathbb{P}(F(V(\boldsymbol{x}, \boldsymbol{y}), U) \leq t^*, \boldsymbol{y} \in R^c)}{\mathbb{P}(F(V(\boldsymbol{x}, \boldsymbol{r}), U) \leq t^*)} = \frac{\mathbb{P}(F(V(\boldsymbol{x}, \boldsymbol{y}), U) \leq t^*, \boldsymbol{y} \in R)}{\mathbb{P}(F(V(\boldsymbol{x}, \boldsymbol{r}), U) \leq t^*)}
\end{align*}
is approximately zero, since for $\boldsymbol{y} \in R$, the clipped score satisfies $V(\boldsymbol{x}, \boldsymbol{y}) \approx M$, implying that $\mathbb{P}(F(V(\boldsymbol{x}, \boldsymbol{y}), U) \leq t^*, \boldsymbol{y} \in R) \approx 0$, because that $F(v,u)$ is monotone with
respect to its first argument $v$. Notably, the large constant $M$ is originally introduced in conformal selection as a relaxation of infinity, and these approximations indeed hold. 
This phenomenon is verified through our extra simulations in Appendix~\ref{extra_experiments:score_comp}.

\section{Additional Details for Numerical Simulations (Section~\ref{sec:simu})}
\subsection{Data Generating Processes and Configuration of the Selection Tasks}
\label{sec:data_gen}

The configuration of the true regression function $\mu$ and noise term $\boldsymbol{\epsilon}\in\mathbb{R}^d$ can be found in Table~\ref{tab:true_regr}.

For the noise term, the degree of freedom for the $t$-distribution scenario is set to $\nu = 3$, and the scale matrix $\Sigma$ is specified as a matrix with diagonal entries of 0.5 and off-diagonal entries of 0.05. The second column denotes the $k$-th output entry of the true regression function $\mu$, which relates to the $k$-th coordinate $y_k$ of the response $\boldsymbol{y}$ and the $k$-th coordinate $\epsilon_k$ of the noise $\boldsymbol{\epsilon}$ as $y_k = [\mu(\boldsymbol{x})]_k + \epsilon_k$. In all of our simulations, we take $p=10$ (recall that $\boldsymbol{x}\sim\text{Unif}(-1, 1)^p$ but $\boldsymbol{y}\in\mathbb{R}^d$). If, when generating $y_k$ the value of $x_\ell$ (the $\ell$-th coordinate of $\boldsymbol{x}$) for $\ell>p$ is needed, we take $x_\ell = x_{((\ell-1)~\text{mod}~p)+1}$.


Table~\ref{tab:coeff} summarizes the values of coefficient vectors ($\boldsymbol{c}$ and $\boldsymbol{r}$) that define the selection tasks for each response dimension. 
Within each vector, all entries share the same value. For instance, if $c_k$ is listed as 1, it indicates that $\boldsymbol{c} = (1, 1, \dots, 1)$.

The six settings we consider differ in two key aspects that influence the difficulty of selection. First, the regression function $\mu$ exhibits varying degrees of nonlinearity across settings. Specifically, Settings 1 and 4 are linear, while Settings 2 and 5 have weak nonlinearity, and Settings 3 and 6 exhibit strong nonlinearity. The degree of nonlinearity affects the predictive accuracy of the estimated regression function $\hat\mu$, which in turn impacts the selection difficulty.
Second, the distribution of the noise term $\boldsymbol{\epsilon}$ differs across settings, leading to variations in the conditional variance. In Settings 1, 2, and 3, $\boldsymbol{\epsilon}$ follows a multivariate Gaussian distribution, whereas in Settings 4, 5, and 6, it follows a multivariate $t$-distribution. The conditional variance $\textnormal{cov}(\boldsymbol{\epsilon})=\textnormal{Var}(\boldsymbol{y} \mid \boldsymbol{x})$ in the latter three settings is $\frac{\nu}{\nu - 2} \Sigma = 3\Sigma$, which is three times larger than that of the former three settings, thereby increasing the difficulty of selection.

\begin{table}[!h]
\centering
\caption{True regression functions and noise distributions}
\label{tab:true_regr}
\setlength{\tabcolsep}{12pt}
\renewcommand{\arraystretch}{2.3}
\begin{tabular}{ c  c  c }
\toprule  
Setting       & $[\mu(\cdot)]_k$ & $\boldsymbol{\epsilon}$ \\
\midrule
1     & $x_{k} - \frac{1}{2} x_{k+1} + x_{k+2} + \frac{3}{2}$  & $\mathcal{N}(0, \Sigma)$  \\  
2     & $x_{k} + x_{k+2}^2 + \frac{1}{2}$  
& $\mathcal{N}(0, \Sigma)$ \\   
3     & \makecell{$\mathbbm{1}\{x_k x_{k+1} > 0\} \cdot \mathbbm{1}\{x_{k+2} > 0.5\} \cdot (0.25 + x_{k+2})$ \\ $\mathbbm{1}\{x_k x_{k+1} \leq 0\} \cdot \mathbbm{1}\{x_{k+2} \leq 0.5\} \cdot (x_{k+2} - 0.25) + 0.75$} & $\mathcal{N}(0, \Sigma)$  \\
4     & \text{Same as Setting~1}  & $t_\nu(0, \Sigma)$  \\  
5     & \text{Same as Setting~2} & $t_\nu(0, \Sigma)$   \\  
6     & \text{Same as Setting~3} & $t_\nu(0, \Sigma)$\\ 
\bottomrule
\end{tabular}
\end{table}

\begin{table}[!h]
\centering
\caption{Coefficients for Selection Task 1 and 2}
\label{tab:coeff}
\setlength{\tabcolsep}{12pt}
\renewcommand{\arraystretch}{1.5}
\begin{tabular}{cccc}
\toprule  
Response Dimension       & $c_k$ (Task 1) & $c_k$ (Task 2) & $r_k$ (Task 2) \\
\midrule
$2$ & $1$ & $2$ & $1.5$ \\
$5$ & $0.2$ & $2$ & $2.6$ \\
$10$ & $-0.2$ & $2$ & $4.1$ \\
$30$ & $-0.6$ & $2$ & $7.5$ \\
\bottomrule
\end{tabular}
\end{table}

To better illustrate the data distributions, Figure~\ref{fig:data_vis} presents example scatter plots of the response vector $\boldsymbol{y}$ for the six settings, with the response dimension set to 2 for visualization purposes. Higher-dimensional cases are not displayed as they are more challenging to interpret.
Settings within the same column share the same conditional expectation $\mathbb{E}(\boldsymbol{y} \mid \boldsymbol{x})$, while those in the lower row exhibit greater noise and more dispersed scatter patterns due to the multivariate $t$-distributed noise $\boldsymbol{\epsilon}$. The target regions for the two tasks, defined based on the coefficients in Table~\ref{tab:coeff}, are highlighted as red and yellow shaded areas in the plots.

\begin{figure}[!h]
    \centering
    \includegraphics[width=0.99\textwidth]{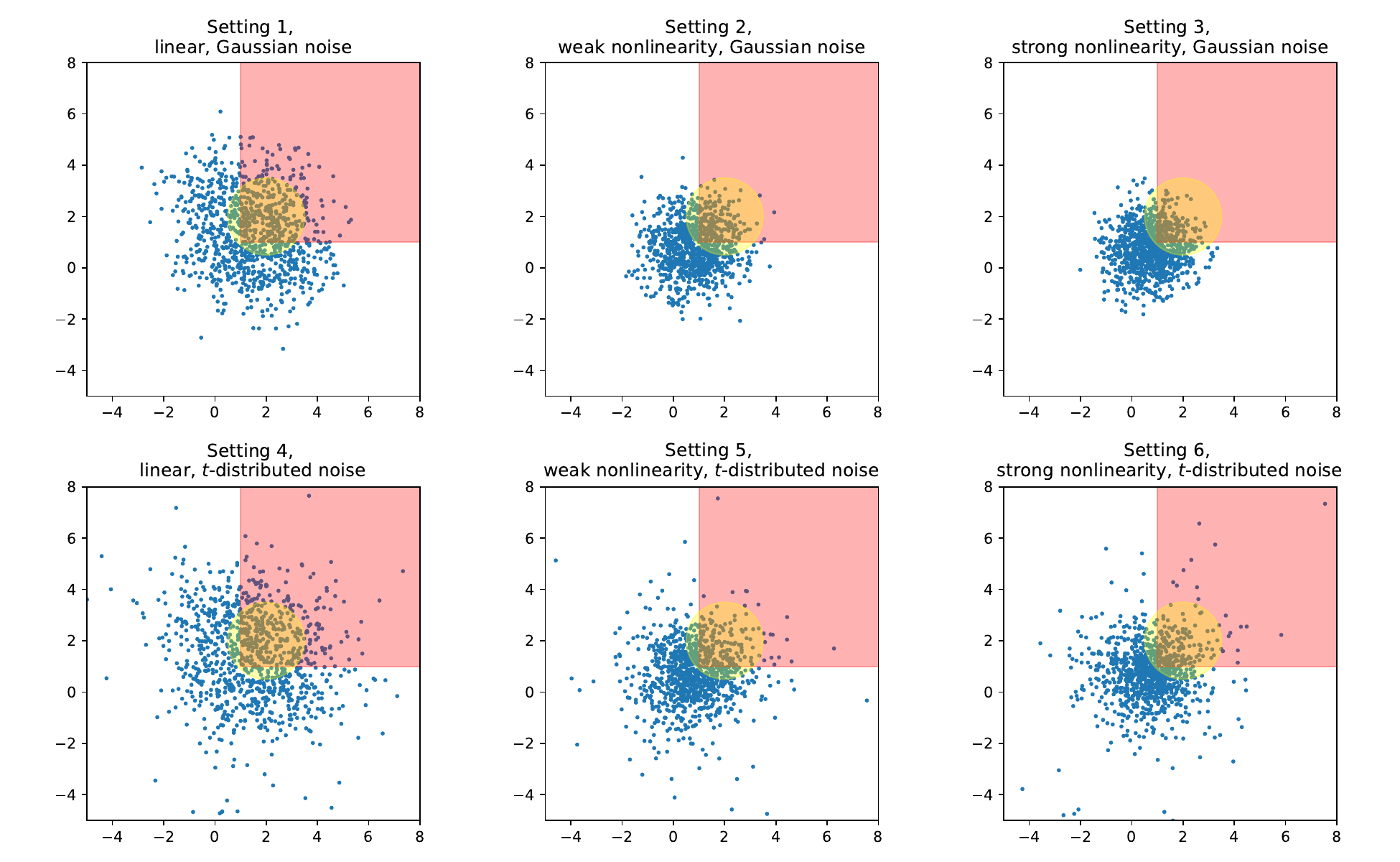}  
    \caption{Scatter plots of 1,000 i.i.d.\,samples from our data generating processes. Each point represents a sample, with the $x$- and $y$-axes corresponding to its responses entries $\boldsymbol{y}_1$ and $\boldsymbol{y}_2$, respectively. The red and the yellow shaded areas represents the two target regions for the two selection tasks. }
    \label{fig:data_vis}  
\end{figure}


\subsection{Extra Simulated Experiments} \label{extra_experiments}

\subsubsection{Comparing Two Distance-Based Scores} \label{extra_experiments:score_comp}

In Tables~\ref{tab:dist_compare_1} and \ref{tab:dist_compare_2}, we compare the performance of the two distance-based scores \eqref{eq:dist_score_1} and \eqref{eq:dist_score_2} in \texttt{mCS-dist}. All other configurations are kept the same as in Section~\ref{sec:simu}.

\begin{table}[!h]
\centering
\small
\setlength{\tabcolsep}{12pt}
\renewcommand{\arraystretch}{1.2}
\caption{Observed FDR of \texttt{mCS-dist} with score \eqref{eq:dist_score_1} and \eqref{eq:dist_score_2} for Task 1 and 2.}
\label{tab:dist_compare_1}

\begin{tabular}{lcccc}
\toprule
 & \multicolumn{2}{c}{Task 1} & \multicolumn{2}{c}{Task 2} \\
\cmidrule(lr){2-3} \cmidrule(lr){4-5}
Setting & Score \eqref{eq:dist_score_1} & Clipped Score \eqref{eq:dist_score_2} & Score \eqref{eq:dist_score_1} & Clipped Score \eqref{eq:dist_score_2} \\
\midrule
1 & $0.154$ & $0.277$ & $0.102$ & $0.265$ \\
2 & $0.269$ & $0.314$ & $0.127$ & $0.263$ \\
3 & $0.241$ & $0.265$ & $0.150$ & $0.223$ \\
4 & $0.202$ & $0.278$ & $0.226$ & $0.286$ \\
5 & $0.282$ & $0.308$ & $0.240$ & $0.292$ \\
6 & $0.265$ & $0.287$ & $0.220$ & $0.283$ \\
\bottomrule
\end{tabular}
\end{table}

\begin{table}[!h]
\centering
\small
\setlength{\tabcolsep}{12pt}
\renewcommand{\arraystretch}{1.2}
\caption{Observed power of \texttt{mCS-dist} with score \eqref{eq:dist_score_1} and \eqref{eq:dist_score_2} for Task 1 and 2.}
\label{tab:dist_compare_2}

\begin{tabular}{lcccc}
\toprule
 & \multicolumn{2}{c}{Task 1} & \multicolumn{2}{c}{Task 2} \\
\cmidrule(lr){2-3} \cmidrule(lr){4-5}
Setting & Score \eqref{eq:dist_score_1} & Clipped Score \eqref{eq:dist_score_2} & Score \eqref{eq:dist_score_1} & Clipped Score \eqref{eq:dist_score_2} \\
\midrule
1 & $0.305$ & $\color{red}0.555$ & $0.440$ & $\color{red}0.760$ \\
2 & $0.069$ & $\color{red}0.104$ & $0.208$ & $\color{red}0.405$ \\
3 & $0.046$ & $\color{red}0.068$ & $0.075$ & $\color{red}0.134$ \\
4 & $0.191$ & $\color{red}0.324$ & $0.222$ & $\color{red}0.333$ \\
5 & $0.052$ & $\color{red}0.060$ & $0.093$ & $\color{red}0.170$ \\
6 & $0.038$ & $\color{red}0.046$ & $0.048$ & $\color{red}0.063$ \\
\bottomrule
\end{tabular}
\end{table}

Across a wide range of tasks and experimental settings, the clipped score \eqref{eq:dist_score_2} consistently demonstrates superior performance. Although both scoring methods guarantee valid FDR control, the clipped score tends to exhibit a higher observed FDR. This finding is consistent with the asymptotic analysis in Appendix~\ref{sec:adv_clip}, which suggests that the clipped score can leverage a larger portion of the available FDR budget.

\subsubsection{Varying Response Dimension}

The main article included results with response dimension $d=30$.  In this section, we access the performance of \texttt{mCS-dist} and \texttt{mCS-learn} in lower dimensional response settings $d \in \{2, 5, 10\}$, examining whether \texttt{mCS-learn} continues to outperform competing methods for smaller values of $d$. 
To simplify the analysis, we focus on Setting 3 for these experiments. All other configurations are kept unchanged.

\begin{table}[!h]
\centering
\small
\setlength{\tabcolsep}{12pt}
\renewcommand{\arraystretch}{1.2}
\caption{Observed FDR of different methods for lower response dimensions.}
\label{tab:other_dim_fdp}

\begin{tabular}{lccccccccc}
\toprule
 & \multicolumn{6}{c}{Task 1} & \multicolumn{3}{c}{Task 2} \\
\cmidrule(lr){2-7} \cmidrule(lr){8-10}
$d$ & \texttt{CS\_int} & \texttt{CS\_ib} & \texttt{CS\_is} & \texttt{bi} & \texttt{mCS-d} & \texttt{mCS-l} & \texttt{bi} & \texttt{mCS-d} & \texttt{mCS-l} \\
\midrule
2 & $0.244$ & $0.000$ & $0.142$ & $0.303$ & $0.268$ & $0.281$ & $0.270$ & $0.299$ & $0.302$ \\
5 & $0.706$ & $0.000$ & $0.255$ & $0.290$ & $0.324$ & $0.294$ & $0.246$ & $0.290$ & $0.302$ \\
10 & $0.735$ & $0.000$ & $0.342$ & $0.311$ & $0.276$ & $0.310$ & $0.257$ & $0.265$ & $0.311$ \\
30 & $0.724$ & $0.000$ & $0.204$ & $0.295$ & $0.264$ & $0.278$ & $0.216$ & $0.223$ & $0.263$ \\
\bottomrule
\end{tabular}
\end{table}

\begin{table}[!h]
\centering
\small
\setlength{\tabcolsep}{12pt}
\renewcommand{\arraystretch}{1.2}
\caption{Observed power of different methods for lower response dimensions.}
\label{tab:other_dim_power}

\begin{tabular}{lccccccccc}
\toprule
 & \multicolumn{6}{c}{Task 1} & \multicolumn{3}{c}{Task 2} \\
\cmidrule(lr){2-7} \cmidrule(lr){8-10}
$d$ & \texttt{CS\_int} & \texttt{CS\_ib} & \texttt{CS\_is} & \texttt{bi} & \texttt{mCS-d} & \texttt{mCS-l} & \texttt{bi} & \texttt{mCS-d} & \texttt{mCS-l} \\
\midrule
2 & $0.029$ & $0.000$ & $0.022$ & $0.032$ & $0.047$ & $\color{red}{0.063}$ & $0.050$ & $0.068$ & $\color{red}{0.069}$ \\
5 & $1.000$ & $0.000$ & $0.067$ & $0.041$ & $0.049$ & $\color{red}{0.083}$ & $0.050$ & $0.057$ & $\color{red}{0.072}$ \\
10 & $1.000$ & $0.000$ & $0.055$ & $0.044$ & $0.058$ & $\color{red}{0.069}$ & $0.062$ & $0.069$ & $\color{red}{0.106}$ \\
30 & $1.000$ & $0.000$ & $0.039$ & $0.059$ & $0.068$ & $\color{red}{0.102}$ & $0.115$ & $0.134$ & $\color{red}{0.179}$ \\
\bottomrule
\end{tabular}
\end{table}

Tables~\ref{tab:other_dim_fdp} and \ref{tab:other_dim_power} show that even for smaller $d$, \texttt{mCS-learn} and \texttt{mCS-dist} continue to outperform the other baseline methods, achieving the best and the second-best power respectively (while controlling FDR).
Notably, the performance gap of \texttt{mCS-learn} over competing methods becomes more pronounced as the response dimension increases, suggesting that \texttt{mCS-learn} is particularly advantageous for high-dimensional settings.



\subsubsection{Nonconvex Target Region} \label{nonconvex}

To examine the suitability of our methods for tasks involving more irregular target regions, we conduct experiments on two additional tasks in which the target region $R$ is nonconvex:
\begin{enumerate}[start=3, itemsep=0.3pt, topsep=1pt]
    \item The complement of the (shifted) orthant, $R = \{\boldsymbol{y}: y_k \geq c_k \hspace{0.2cm} \forall k\}^c = \{\boldsymbol{y}: y_k < c_k \hspace{0.2cm} \textnormal{for some } k\}$,
    \item The complement of a sphere centered at $\boldsymbol{c}$, $R = \{\boldsymbol{y}: \|\boldsymbol{y} - \boldsymbol{c}\|_2 > r\}$.
\end{enumerate}

To ensure that a reasonable proportion (15\%-35\%) of responses $\boldsymbol{y}$ fall within the selection region, the coefficients $c_k$ and $r$ are defined differently from their counterparts in Task~1 and 2. Table~\ref{tab:coeff_34} presents the specific values, where each scalar, following the convention in Table~\ref{tab:coeff}, represents a vector whose entries are all equal to that scalar.
All other setups, configurations, and model hyperparameters are the same as in Section~\ref{subsec:simu}.

\begin{table}[!h]
\centering
\caption{Coefficients for Selection Task 3 and 4}
\label{tab:coeff_34}
\setlength{\tabcolsep}{12pt}
\renewcommand{\arraystretch}{1.5}
\begin{tabular}{cccc}
\toprule  
Response Dimension       & $c_k$ (Task 3) & $c_k$ (Task 3) & $r_k$ (Task 4) \\
\midrule
$2$ & $-0.5$ & $2$ & $3$ \\
$5$ & $-0.8$ & $2$ & $4$ \\
$10$ & $1.1$ & $2$ & $5.5$ \\
$30$ & $1.6$ & $2$ & $9.5$ \\
\bottomrule
\end{tabular}
\end{table}

Table~\ref{tab:fdp34} and Table~\ref{tab:power34} summarize the observed power and FDR for the two additional tasks respectively, where the dimension is $d=30$ and nominal FDR level is $q=0.3$.
For the nonconvex tasks, \texttt{mCS-dist} demonstrates inferior performance compared to the baseline \texttt{bi}. In contrast, \texttt{mCS-learn} performs comparably to \texttt{bi} in relatively easy settings and surpasses \texttt{bi} in more challenging scenarios, such as Settings 3 and 6.
Consequently, \texttt{mCS-learn} emerges as the preferred choice when $R$ is nonconvex or otherwise irregular.

\begin{table}[!h]
\centering
\small
\setlength{\tabcolsep}{12pt}
\renewcommand{\arraystretch}{1.2}
\caption{Observed FDR of different methods for Task 3 and 4.}

\label{tab:fdp34}
\begin{tabular}{lcccccc}
\toprule
 & \multicolumn{3}{c}{Task 3} & \multicolumn{3}{c}{Task 4} \\
\cmidrule(lr){2-4} \cmidrule(lr){5-7}
Setting & \texttt{bi} & \texttt{mCS-d} & \texttt{mCS-l} & \texttt{bi} & \texttt{mCS-d} & \texttt{mCS-l} \\
\midrule
1 & $0.321$ & $0.267$ & $0.280$ & $0.327$ & $0.298$ & $0.304$ \\
2 & $0.312$ & $0.228$ & $0.332$ & $0.381$ & $0.271$ & $0.286$ \\
3 & $0.334$ & $0.257$ & $0.276$ & $0.385$ & $0.276$ & $0.275$ \\
4 & $0.314$ & $0.328$ & $0.277$ & $0.298$ & $0.298$ & $0.284$ \\
5 & $0.273$ & $0.266$ & $0.279$ & $0.305$ & $0.288$ & $0.313$ \\
6 & $0.237$ & $0.255$ & $0.303$ & $0.335$ & $0.263$ & $0.265$ \\
\bottomrule
\end{tabular}
\end{table}

\begin{table}[!h]
\centering
\small
\setlength{\tabcolsep}{12pt}
\renewcommand{\arraystretch}{1.2}
\caption{Observed power of different methods for Task 3 and 4.}

\label{tab:power34}
\begin{tabular}{lcccccc}
\toprule
 & \multicolumn{3}{c}{Task 3} & \multicolumn{3}{c}{Task 4} \\
\cmidrule(lr){2-4} \cmidrule(lr){5-7}
Setting & \texttt{bi} & \texttt{mCS-d} & \texttt{mCS-l} & \texttt{bi} & \texttt{mCS-d} & \texttt{mCS-l} \\
\midrule
1 & $\color{red}{0.305}$ & $0.052$ & $0.255$ & $\color{red}{0.763}$ & $0.311$ & $0.555$ \\
2 & $0.008$ & $0.011$ & $\color{red}{0.017}$ & $\color{red}{0.098}$ & $0.015$ & $0.089$ \\
3 & $\color{red}{0.018}$ & $0.005$ & $0.017$ & $0.043$ & $0.012$ & $\color{red}{0.044}$ \\
4 & $\color{red}{0.307}$ & $0.050$ & $0.241$ & $\color{red}{0.589}$ & $0.233$ & $0.446$ \\
5 & $0.024$ & $0.001$ & $\color{red}{0.034}$ & $0.114$ & $0.017$ & $\color{red}{0.117}$ \\
6 & $0.023$ & $0.001$ & $\color{red}{0.040}$ & $0.036$ & $0.016$ & $\color{red}{0.062}$ \\
\bottomrule
\end{tabular}
\end{table}

\subsubsection{Other Pretrained Models $\hat\mu$}
\label{other_mu}
We also test the performance of various methods when the underlying model $\hat\mu$ is less predictive. To simulate this effect, we train $\hat\mu$ as a linear model instead of support vector machines as in Section~\ref{sec:simu}. All other setups remain unchanged.

Table~\ref{tab:linear_mu_fdp} and Table~\ref{tab:linear_mu_power} summarize the FDR and power of different methods when $\hat\mu$ is a fitted linear model.
Although overall performance declines for most procedures, \texttt{mCS-dist} and \texttt{mCS-learn} remain comparatively stronger than the baselines. In particular, \texttt{mCS-learn} appears less sensitive to the imperfect linear fit, thanks to its data-adaptive nonconformity score $V^\theta$, which is learned based on the model $\hat\mu$ and the observed data.

\begin{table}[!h]
\centering
\small
\setlength{\tabcolsep}{12pt}
\renewcommand{\arraystretch}{1.2}
\caption{Observed FDR of different methods for lower response dimensions.}
\label{tab:linear_mu_fdp}

\begin{tabular}{lccccccccc}
\toprule
 & \multicolumn{6}{c}{Task 1} & \multicolumn{3}{c}{Task 2} \\
\cmidrule(lr){2-7} \cmidrule(lr){8-10}
Setting & \texttt{CS\_int} & \texttt{CS\_ib} & \texttt{CS\_is} & \texttt{bi} & \texttt{mCS-d} & \texttt{mCS-l} & \texttt{bi} & \texttt{mCS-d} & \texttt{mCS-l} \\
\midrule
1 & $0.776$ & $0.101$ & $0.266$ & $0.274$ & $0.264$ & $0.267$ & $0.257$ & $0.273$ & $0.276$\\
2 & $0.798$ & $0.000$ & $0.230$ & $0.269$ & $0.299$ & $0.288$ & $0.245$ & $0.222$ & $0.275$ \\
3 & $0.742$ & $0.000$ & $0.210$ & $0.312$ & $0.235$ & $0.206$ & $0.232$ & $0.223$ & $0.287$\\
4 & $0.803$ & $0.000$ & $0.267$ & $0.318$ & $0.290$ & $0.282$ & $0.307$ & $0.286$ & $0.260$\\
5 & $0.813$ & $0.000$ & $0.174$ & $0.306$ & $0.279$ & $0.272$ & $0.273$ & $0.298$ & $0.285$\\
6 & $0.779$ & $0.000$ & $0.268$ & $0.259$ & $0.306$ & $0.264$ & $0.292$ & $0.267$ & $0.279$\\
\bottomrule
\end{tabular}
\end{table}

\begin{table}[!h]
\centering
\small
\setlength{\tabcolsep}{12pt}
\renewcommand{\arraystretch}{1.2}
\caption{Observed power of different methods for lower response dimensions.}
\label{tab:linear_mu_power}

\begin{tabular}{lccccccccc}
\toprule
 & \multicolumn{6}{c}{Task 1} & \multicolumn{3}{c}{Task 2} \\
\cmidrule(lr){2-7} \cmidrule(lr){8-10}
Setting & \texttt{CS\_int} & \texttt{CS\_ib} & \texttt{CS\_is} & \texttt{bi} & \texttt{mCS-d} & \texttt{mCS-l} & \texttt{bi} & \texttt{mCS-d} & \texttt{mCS-l} \\
\midrule
1 & $1.000$ & $0.204$ & $0.467$ & $0.195$ & ${\color{red}0.550}$ & $0.379$ & $0.063$ & ${\color{red}0.842}$ & $0.583$\\
2 & $1.000$ & $0.000$ & $0.024$ & $0.077$ & $0.071$ & ${\color{red}0.089}$ & $0.386$ & $0.337$ & ${\color{red}0.419}$ \\
3 & $1.000$ & $0.000$ & $0.026$ & $0.066$ & $0.074$ & ${\color{red}0.079}$ & $0.129$ & $0.138$ & ${\color{red}0.148}$\\
4 & $1.000$ & $0.000$ & $0.203$ & $0.120$ & ${\color{red}0.309}$ & $0.186$ & $0.040$ & ${\color{red}0.408}$ & $0.188$\\
5 & $1.000$ & $0.000$ & $0.016$ & $0.036$ & $0.048$ & ${\color{red}0.051}$ & ${\color{red}0.176}$ & $0.135$ & $0.162$\\
6 & $1.000$ & $0.000$ & $0.028$ & $0.042$ & $0.039$ & ${\color{red}0.048}$ & $0.056$ & $0.056$ & ${\color{red}0.073}$\\
\bottomrule
\end{tabular}
\end{table}

\subsubsection{Comparison of Loss Functions in \textnormal{\texttt{mCS-learn}}} 
\label{extra_experiments:loss_comp}

Here we compare the performance of \texttt{mCS-learn} with loss $L_1$ \eqref{eq:loss} and loss $L_2$ \eqref{eq:alt_loss}. The response dimension is $d=30$, and the nominal FDR level is $q=0.3$.

\begin{table}[!h]
\centering
\small
\setlength{\tabcolsep}{12pt}
\renewcommand{\arraystretch}{1.2}
\caption{Observed FDR of \texttt{mCS-learn} with loss $L_1$ and $L_2$.}

\begin{tabular}{lcccc}
\toprule
 & \multicolumn{2}{c}{Task 1} & \multicolumn{2}{c}{Task 2} \\
\cmidrule(lr){2-3} \cmidrule(lr){4-5}
Setting & $L_1$ & $L_2$ & $L_1$ & $L_2$ \\
\midrule
1 & $0.255$ & $0.251$ & $0.276$ & $0.279$ \\
2 & $0.279$ & $0.267$ & $0.278$ & $0.273$ \\
3 & $0.299$ & $0.278$ & $0.331$ & $0.263$ \\
4 & $0.274$ & $0.315$ & $0.280$ & $0.254$ \\
5 & $0.210$ & $0.239$ & $0.278$ & $0.273$ \\
6 & $0.296$ & $0.258$ & $0.361$ & $0.207$ \\
\bottomrule
\end{tabular}
\end{table}

\begin{table}[!h]
\centering
\small
\setlength{\tabcolsep}{12pt}
\renewcommand{\arraystretch}{1.2}
\caption{Observed power of \texttt{mCS-learn} with loss $L_1$ and $L_2$.}
\label{tab:losses}
\begin{tabular}{lcccc}
\toprule
 & \multicolumn{2}{c}{Task 1} & \multicolumn{2}{c}{Task 2} \\
\cmidrule(lr){2-3} \cmidrule(lr){4-5}
Setting & $L_1$ & $L_2$ & $L_1$ & $L_2$ \\
\midrule
1 & $0.040$ & $\color{red}{0.325}$ & $0.051$ & $\color{red}{0.534}$ \\
2 & $0.024$ & $\color{red}{0.109}$ & $0.111$ & $\color{red}{0.421}$ \\
3 & $0.022$ & $\color{red}{0.102}$ & $0.041$ & $\color{red}{0.179}$ \\
4 & $0.034$ & $\color{red}{0.199}$ & $0.019$ & $\color{red}{0.180}$ \\
5 & $0.009$ & $\color{red}{0.042}$ & $0.041$ & $\color{red}{0.189}$ \\
6 & $0.012$ & $\color{red}{0.034}$ & $0.032$ & $\color{red}{0.061}$ \\
\bottomrule
\end{tabular}
\end{table}

As noted in the main article, employing the $L_2$ loss obviates the need for an additional round of smooth ranking, thereby enhancing both numerical stability and training efficacy. This explains the superior power observed in Table~\ref{tab:losses} when \texttt{mCS-learn} utilizes the $L_2$ loss.

\subsubsection{Comparison of Score Families in \textnormal{\texttt{mCS-learn}}}

When learning $f_\theta$ in \texttt{mCS-learn}, various forms of data may be utilized as inputs, giving rise to different score families. 
The simplest approach draws exclusively on the covariates $\boldsymbol{x}$ as features, making it applicable even in scenarios where the model $\hat\mu$ is not available. Alternatively, incorporating the response $\boldsymbol{y}$ expands the family to~\eqref{eq:score_class}. When $\hat{\mu}$ is available, its predictions $\hat{\mu}(\boldsymbol{x})$ can also be included as input. Although this does not increase the overall expressiveness of the family, it can accelerate the training process.
In this section, we evaluate the performance of the following score families with varying complexity:
\begin{align*}
    1. \quad &\textnormal{Covariate only: } M \cdot\mathbbm{1}\{ \boldsymbol{y} \notin R^c \cup \partial R\} - f_\theta(\boldsymbol{x}; R). \\
    2. \quad &\textnormal{Prediction only: } M \cdot\mathbbm{1}\{ \boldsymbol{y} \notin R^c \cup \partial R\} - f_\theta(\hat\mu(\boldsymbol{x}); R). \\
    3. \quad &\textnormal{Covariate and Prediction: } M \cdot\mathbbm{1}\{ \boldsymbol{y} \notin R^c \cup \partial R\} - f_\theta(\boldsymbol{x}, \hat\mu(\boldsymbol{x}); R). \\
    4. \quad &\textnormal{Full Family \eqref{eq:family}}: M \cdot\mathbbm{1}\{ \boldsymbol{y} \notin R^c \cup \partial R\} - f_\theta(\boldsymbol{x}, \boldsymbol{y}; R). \\
    5. \quad &\textnormal{All available information: } M \cdot\mathbbm{1}\{ \boldsymbol{y} \notin R^c \cup \partial R\} - f_\theta(\boldsymbol{x}, \hat\mu(\boldsymbol{x}), \boldsymbol{y}; R).
\end{align*}
As discussed in the main article, families 4 and 5 incorporate $\boldsymbol{y}$ as an input, which compromises the regional monotonicity of the score unless $M$ is sufficiently large. 
Table~\ref{tab:score_families_fdppower} summarizes the FDR and power of \texttt{mCS-learn} with different families, with Setting 3. For both Task 1 and Task 2, families 4 and 5 violated FDR control when $M = 10^{3}$. Among the three subfamilies that maintained valid FDR control, family 3 exhibited the best performance.



\begin{table}[!h]
\centering
\small
\setlength{\tabcolsep}{12pt}
\renewcommand{\arraystretch}{1.2}
\caption{Observed FDR and power of \texttt{mCS-learn} with different score families.}
\label{tab:score_families_fdppower}

\begin{tabular}{lcccc}
\toprule
& \multicolumn{2}{c}{FDR} & \multicolumn{2}{c}{Power} \\
\cmidrule(lr){2-3} \cmidrule(lr){4-5}
Family & Task 1 & Task 2 & Task 1 & Task 2 \\
\midrule
1 & $0.298$ & $0.237$ & ${\color{red}0.108}$ & $0.165$ \\
2 & $0.260$ & $0.291$ & $0.096$ & $0.175$ \\
3 & $0.278$ & $0.263$ & $0.102$ & ${\color{red}0.179}$ \\
4 & $0.711$ & $0.594$ & $0.972$ & $0.807$ \\
5 & $0.667$ & $0.494$ & $0.782$ & $0.536$\\
\bottomrule
\end{tabular}
\end{table}

\section{Additional Details for Real Data Application (Section~\ref{sec:real_data})} \label{real_results}

\subsection{Overview of Drug Discovery Data and Configuration of the Selection Tasks}

The drug discovery dataset we used in Section~\ref{sec:real_data} is compiled from various public sources \cite{wenzel2019predictive, iwata2022predicting, kim2023pubchem, watanabe2018predicting, falcon2022reliable, esposito2020combining, braga2015pred, aliagas2022comparison, perryman2020pruned, meng2022boosting, vermeire2022predicting}.
Because the integrated data contained missing values, we employed Chemprop \cite{yang2019analyzing, heid2023chemprop} to impute these entries. The resulting imputed dataset was then used in all subsequent experiments.
The processed dataset contains $n=22805$ data points.

We list the names, units and cutoffs of the responses (only relevant to the first selection task) in the imputed dataset in Table~\ref{tab:drug_data_responses}, and provide detailed descriptions of their biological significance and drug discovery relevance in Figure~\ref{fig:desc}.
Recall that in the first task we consider target regions of the shape $R = \{\boldsymbol{y}: y_k \geq c_k \hspace{0.2cm} \forall k\}$, 
where the cutoffs are the values $c_k$ defining the selection problem.  Figure~\ref{fig:hist} shows the distribution of these 15 responses, with vertical red lines indicating their corresponding cutoffs. Approximately 21\% of the test dataset compounds exceed all 15 thresholds, thereby qualifying for selection. 

For the second task, the target region is defined as a sphere $\{\boldsymbol{y}: \|\boldsymbol{y} - \boldsymbol{c}\|_2 \leq r \}$. For convenience, we take the center of the sphere the same as the cutoffs $c_k$ in task 1, and let $r = 2.4$. Under this definition, approximately 24\% of the compounds qualify for selection.
Similarly, for the third task where the target region is the complement of a sphere $\{\boldsymbol{y}: \|\boldsymbol{y} - \boldsymbol{c}\|_2 \leq r' \}$, we adopt the same center $\boldsymbol{c}$ and set $r' = 3.4$. This choice ensures that 18\% of the compounds qualify for selection.

\begin{table}[!h]
\centering
\small
\setlength{\tabcolsep}{12pt}
\renewcommand{\arraystretch}{1.4}
\caption{List of responses in the drug discovery dataset.}
\label{tab:drug_data_responses}

\begin{tabular}{lll}
\toprule
Name & Unit & Cutoff $c_k$ \\
\midrule
CL\_microsome\_human & $\log10$ (mL/min/kg) & $4$\\
CL\_microsome\_mouse & $\log10$ (mL/min/kg) & $4$\\
CL\_microsome\_rat & $\log10$ (mL/min/kg) & $4$\\
CL\_total\_dog & $\log10$ (mL/min/kg) & $0.5$\\
CL\_total\_human & $\log10$ (mL/min/kg) & $0$\\ 
CL\_total\_monkey & $\log10$ (mL/min/kg) & $0.5$\\
CL\_total\_rat & $\log10$ (mL/min/kg) & $1$\\
CYP2C8\_inhibition & $\log10$ (nMolar IC50) & $3.5$\\
CYP2C9\_inhibition & $\log10$ (nMolar IC50) & $3.5$\\
CYP2D6\_inhibition & $\log10$ (nMolar IC50) & $3.5$\\
CYP3A4\_inhibition & $\log10$ (nMolar IC50) & $3.5$\\
Papp\_Caco2 & $\log10$ ($10^{-6}$cm/s) & $0.8$\\
Pgp\_human & $\log10$ (efflux ratio) & $-0.2$\\
hERG\_binding & $\log10$ (nMolar IC50) & $3.5 $\\ 
LogD\_pH\_7.4 & $\log10$ (M/M) & $2$ \\ 
\bottomrule 
\end{tabular}
\end{table}

\newcommand{\desctoprule}{\noindent\rule{\textwidth}{0.5pt}\vspace{0.5em}}
\newcommand{\descbottomrule}{\vspace{0.5em}\noindent\rule{\textwidth}{0.5pt}}

\begin{figure}[!h]
    \centering
    \includegraphics[width=0.93\textwidth]{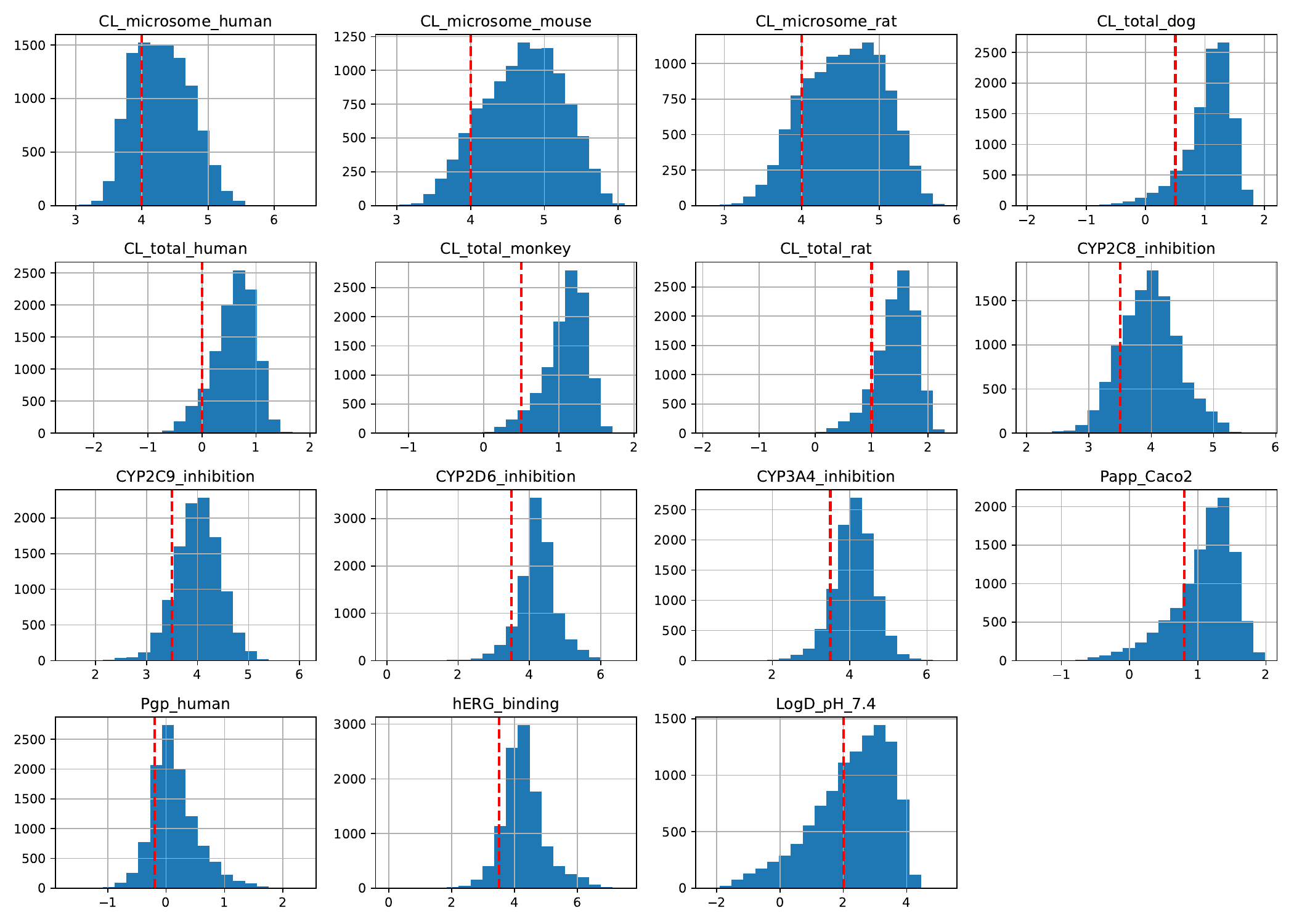}  
    \caption{Histograms of 15 responses in the drug dataset. The vertical red lines denote the corresponding cutoffs (for the first task) for each response. }
    \label{fig:hist}  
\end{figure}

\clearpage

\begin{figure}[!h]
\centering
\begin{minipage}{0.9\textwidth}
\caption{Description and drug discovery relevance of the responses.}
\label{fig:desc}
\desctoprule
\begin{description}[align=left, labelwidth=3.2cm, leftmargin=!, font=\normalfont, style=nextline, topsep=0pt, itemsep=0pt]
  \item[\textbf{CL\_microsome\_human}:] 
      Intrinsic metabolic clearance in human liver microsomes. (Help predict rate of liver metabolism in human through the in vitro in vivo correlation.)
  \item[\textbf{CL\_microsome\_mouse}:] 
      Intrinsic metabolic clearance in mouse liver microsomes. (Help predict rate of liver metabolism in human through the in vitro in vivo correlation.)
  \item[\textbf{CL\_microsome\_rat}:] 
      Intrinsic metabolic clearance in rat liver microsomes. (Help predict rate of liver metabolism in human through the in vitro in vivo correlation.)
  \item[\textbf{CL\_total\_dog}:] 
      Total body clearance measured in vivo in dogs. (Drug exposure in this species, crucial for determine human dose regimens through translational modeling.)
  \item[\textbf{CL\_total\_human}:] 
      Total body clearance measured in vivo in humans  (Drug exposure in this species, crucial for determine human dose regimens through translational modeling.)
  \item[\textbf{CL\_total\_monkey}:] 
      Total body clearance measured in vivo in monkeys. (Drug exposure in this species, crucial for determine human dose regimens through translational modeling.)
  \item[\textbf{CL\_total\_rat}:] 
      Total body clearance measured in vivo in rats.  (Drug exposure in this species, crucial for determine human dose regimens through translational modeling.)
  \item[\textbf{CYP2C8\_inhibition}:] 
      Inhibition potential against human CYP2C8 enzyme. (Assessment of risk of drug-drug interactions (DDIs) involving drugs metabolized by CYP2C8. Important for safety assessment.)
  \item[\textbf{CYP2C9\_inhibition}:] 
      Inhibition potential against human CYP2C9 enzyme. (Assessment of drug-drug interactions (DDIs) involving drugs metabolized by CYP2C9 (e.g., warfarin). Important for safety assessment.)
  \item[\textbf{CYP2D6\_inhibition}:] 
      Inhibition potential against human CYP2D6 enzyme. (Assessment of drug-drug interactions (DDIs) involving drugs metabolized by CYP2D6 (e.g., some antidepressants, beta-blockers). Important for safety assessment.)
  \item[\textbf{CYP3A4\_inhibition}:] 
      Inhibition potential against human CYP3A4 enzyme.  (Assessment of drug-drug interactions (DDIs) involving drugs metabolized by CYP3A4 (a very common pathway). Important for safety assessment.)
  \item[\textbf{Papp\_Caco2}:] 
      Apparent permeability across Caco-2 cell monolayer. (Potential for oral absorption by crossing the intestinal wall. High Papp suggests good absorption is likely.)
  \item[\textbf{Pgp\_human}:] 
      Interaction with human P-glycoprotein (Pgp) efflux transporter. (Assess if the drug is pumped out of cells by Pgp, potentially limiting oral absorption and brain penetration.)
  \item[\textbf{hERG\_binding}:] 
      Binding/inhibition of the hERG potassium channel. (Assessment of risk of cardiac toxicity (QT prolongation, arrhythmias). A critical safety screen; high binding is a major safety concern.)
  \item[\textbf{LogD\_pH\_7.4}:] 
      Logarithm of the distribution coefficient at pH 7.4. (Predicts drug's lipophilicity (fat vs. water solubility) under physiological conditions. Influences absorption, distribution, membrane crossing, and CNS penetration.)
\end{description}
\descbottomrule
\end{minipage}
\end{figure}

\clearpage

\subsection{Extra Experiments on Real Data}

\subsubsection{Numerical Stability of Different Methods}

In this section, we test the numerical stability of \texttt{mCS-dist} and \texttt{mCS-learn} on real data. To do this, unlike previous experiments where we sample new data or randomly partition data for each iteration, we fix the pretrained model $\hat\mu$, the calibration data $\mathcal{D}_{\textnormal{cal}}$ and the test data $\mathcal{D}_{\textnormal{test}}$. 
This is to avoid the variability from data sampling or pretraining, and only consider the inherent variance of the methods. 
Due to the high computation cost of retraining $f_\theta$, we also keep $f_\theta$ the same across different iterations.
We keep all other setups and configurations unchanged as in Section~\ref{sec:real_data}.

\setlength{\tabcolsep}{12pt}
\renewcommand{\arraystretch}{1.2}
\begin{table}[!h]
\centering
\small
\caption{Observed standard error of FDRs for different methods with real data.}

\label{tab:stab_fdp}
\begin{adjustbox}{max width=0.7\textwidth}
\begin{tabular}{lccccccc}
\toprule

Task & $q$ & \texttt{CS\_int} & \texttt{CS\_ib} & \texttt{CS\_is} & \texttt{bi} & \texttt{mCS-d} & \texttt{mCS-l}\\
\midrule
1 & $0.3$ & $0.000$ & $0.000$ & $0.248$ & $0.000$ & $0.000$ & $0.000$ \\
2 & $0.3$ & $-$ & $-$ & $-$ & $0.000$ & $0.000$ & $0.002$ \\
3 & $0.3$ & $-$ & $-$ & $-$ & $0.000$ & $0.021$ & $0.004$ \\
\midrule
1 & $0.5$ & $0.000$ & $0.000$ & $0.032$ & $0.000$ & $0.001$ & $0.004$ \\
2 & $0.5$ & $-$ & $-$ & $-$ & $0.000$ & $0.000$ & $0.002$ \\
3 & $0.5$ & $-$ & $-$ & $-$ & $0.000$ & $0.028$ & $0.005$ \\
\bottomrule
\end{tabular}
\end{adjustbox}
\end{table}

\begin{table}[!h]
\centering
\small
\caption{Observed standard error of powers for different methods with real data.}

\label{tab:stab_power}
\begin{adjustbox}{max width=0.7\textwidth}
\begin{tabular}{lccccccc}
\toprule

Task & $q$ & \texttt{CS\_int} & \texttt{CS\_ib} & \texttt{CS\_is} & \texttt{bi} & \texttt{mCS-d} & \texttt{mCS-l}\\
\midrule
1 & $0.3$ & $0.000$ & $0.000$ & $0.022$ & $0.000$ & $0.000$ & $0.000$ \\
2 & $0.3$ & $-$ & $-$ & $-$ & $0.000$ & $0.000$ & $0.001$ \\
3 & $0.3$ & $-$ & $-$ & $-$ & $0.000$ & $0.005$ & $0.002$ \\
\midrule
1 & $0.5$ & $0.000$ & $0.000$ & $0.056$ & $0.000$ & $0.000$ & $0.004$ \\
2 & $0.5$ & $-$ & $-$ & $-$ & $0.000$ & $0.000$ & $0.004$ \\
3 & $0.5$ & $-$ & $-$ & $-$ & $0.000$ & $0.011$ & $0.005$ \\
\bottomrule
\end{tabular}
\end{adjustbox}
\end{table}

For all methods, the standard errors for both FDR and power remain low, with the exception of \texttt{CS\_is} at a nominal level of $q=0.3$. This phenomenon likely arises from the unstable nature of the subroutine nominal level search at lower user-specified nominal levels. The failure of \texttt{CS\_is} to control the FDR at this low nominal level (Figure~\ref{fig:varying_q}) further corroborates this interpretation. 

\end{document}